\UseRawInputEncoding
\documentclass[12pt, draftclsnofoot, journal, letterpaper, onecolumn]{IEEEtran}

\makeatletter
\def\ps@headings{%
\def\@oddhead{\mbox{}\scriptsize\rightmark \hfil \thepage}%
\def\@evenhead{\scriptsize\thepage \hfil \leftmark\mbox{}}%
\def\@oddfoot{}%
\def\@evenfoot{}}
\makeatother \pagestyle{headings}

\IEEEoverridecommandlockouts
\usepackage{bbm}
\usepackage{amsfonts}
\usepackage[dvips]{graphicx}
\usepackage{times}
\usepackage{cite}
\usepackage{amsmath}
\usepackage{array}
\usepackage{amssymb}

\usepackage{stfloats}
\usepackage{slashbox}
\usepackage{graphicx}
\usepackage{subfigure}
\usepackage{footnote}
\usepackage{booktabs}
\usepackage{array}
\usepackage{algorithm}
\usepackage{subeqnarray}
\usepackage{cases}
\usepackage{threeparttable}
\usepackage{color}
\usepackage{epstopdf}
\usepackage{algpseudocode}
\usepackage{bm}
\usepackage{multirow}
% to add parentheses around subfig references
%\usepackage[labelformat=simple]{subcaption}
%\usepackage[labelfont=small]{caption}
\usepackage{adjustbox}
%http://tex.stackexchange.com/questions/163246/resize-a-tabular-object-to-textwidth
%\renewcommand\thesubfigure{(\alph{subfigure})}
\newtheorem{property}{Property}
\newtheorem{theorem}{Theorem}
\newtheorem{remark}{Remark}

\newtheorem{lemma}{Lemma}

\newtheorem{corollary}{Corollary}
\allowdisplaybreaks[4]
% line break in a table

\begin{document}

\title{Gradient Statistics Aware Power Control for Over-the-Air Federated Learning}

\author{Naifu Zhang and Meixia Tao
\thanks{N. Zhang and M. Tao are with the Department of Electronic Engineering, Shanghai Jiao Tong University, Shanghai,
200240, P. R. China. (email: arthaslery@sjtu.edu.cn; mxtao@sjtu.edu.cn.). Part of this work is presented at IEEE ICC 2020 workshop \cite{zhang2020gradient}}}

\maketitle
\vspace{-1.5cm}
\begin{abstract}
Federated learning (FL) is a promising technique that enables many edge devices to train a machine learning model collaboratively in wireless networks.
By exploiting the superposition nature of wireless waveforms, over-the-air computation (AirComp) can accelerate model aggregation and hence facilitate communication-efficient FL.
Due to channel fading, power control is crucial in AirComp.
Prior works assume that the signals to be aggregated from each device, i.e., local gradients have identical statistics.
In FL, however, gradient statistics vary over both training iterations and feature dimensions, and are unknown in advance.
This paper studies the power control problem for over-the-air FL by taking gradient statistics into account.
The goal is to minimize the aggregation error by optimizing the transmit power at each device subject to peak power constraints. We obtain the optimal policy in closed form when gradient statistics are given.
Notably, we show that the optimal transmit power is continuous and monotonically decreases with the squared multivariate coefficient of variation (SMCV) of gradient vectors.
We then propose a method to estimate gradient statistics with negligible communication cost.
Experimental results demonstrate that the proposed gradient-statistics-aware power control achieves higher test accuracy than the existing schemes for a wide range of scenarios.
\end{abstract}

\begin{IEEEkeywords}
Federated learning, over-the-air computation, power control, fading channel.
\end{IEEEkeywords}

\section{Introduction}
The proliferation of mobile devices such as smartphones, tablets, and wearable devices has revolutionized people's daily lives.
Due to the growing computation and sensing capabilities of these devices, a wealth of data has been generated each day.
This has promoted a wide range of artificial intelligence (AI) applications such as image recognition and natural language processing.
Traditional machine learning procedures, including both training and inference, are carried out through cloud computing on a centralized data center that has full access to the entire data set.
Wireless edge devices are thus required to upload their collected raw data to the center, which can be very costly in terms of energy and bandwidth consumption, and unfavorable due to response delay and privacy concerns.
It is thus increasingly desired to let edge devices engage in the learning process by keeping the raw data locally and performing training/inference either collaboratively or individually.
This emerging technology is known as Edge Machine Learning \cite{park2019wireless} or Edge Intelligence \cite{zhou2019edge}.

Federated learning (FL)\cite{mcmahan2016communication,konen2016federated,bonawitz2019towards,yang2019federated} is a new edge learning framework that enables many edge devices to collaboratively train a machine learning model without exchanging datasets under the coordination of an edge server in wireless networks.
Compared with traditional learning at a centralized data center, FL offers several distinct advantages, such as preserving privacy, reducing network congestion, and leveraging distributed on-device computation.
In FL, each edge device downloads a shared model from the edge server, computes an update to the current model by learning from its local dataset, then sends this update to the edge server.
Therein, the updates are averaged to improve the shared model.
FL has recently attracted significant attention from both academia and industry, such as \cite{9026922,9003425,9084352,tao2020zte}.

The main bottleneck in FL is the communication cost since a large number of participating edge devices send their updates to the edge server at each round of the model training.
Existing methods to obtain communication-efficient FL can be mainly divided into three categories: model parameter compression \cite{alistarh2018qsgd,seide20141}, gradient sparsification \cite{aji2017sparse,tsuzuku2018variance}, and infrequent local update \cite{mcmahan2016communication,wang2019adaptive}.
Nevertheless, the communication cost of FL is still proportional to the number of edge devices, and thus inefficient in large-scale environment.
Recently, a fast model aggregation approach is proposed for synchronous FL by applying the over-the-air computation (AirComp) principle \cite{nazer2007computation}, such as in \cite{yang2018federated,zhu2018low,amiri2019machine,amiri2019federated}.
This is accomplished by exploiting the waveform superposition nature of the wireless medium to compute the desired function of the distributed local gradients (i.e., the weighted average function) by concurrent transmission.
Such AirComp-based FL, referred to as \emph{over-the-air FL} in this work, can dramatically save the uplink communication bandwidth.

Due to the channel fading, device selection and power control are crucial to achieve a reliable and high-performance over-the-air FL.
In \cite{cao2019optimal}, the authors jointly optimize the transmit power at edge devices and the receive scaling factor (known as denoising factor) at the edge server for minimizing the mean square error (MSE) of the aggregated signal.
It is shown that the optimal transmit power in static channels exhibits a threshold-based structure.
Namely, each device applies channel-inversion power control if its quality indicator exceeds the optimal threshold, and applies full power transmission otherwise.
The work \cite{cao2019optimal}, however, is purely for AirComp-based signal aggregation (not in the context of learning), where the signal on each device is assumed to be independent and identically distributed (IID) with zero mean and unit variance.
For AirComp-based gradient aggregation in FL, the work \cite{zhu2018low} introduces a truncation-based approach for excluding the edge devices with deep fading channels to strike a good balance between learning performance and aggregation error.
The work \cite{yang2018federated} proposes a joint device selection and receiver beamforming design method to find the maximum number of selected devices with MSE requirements to improve the learning performance.
As in \cite{cao2019optimal}, it is assumed in both \cite{yang2018federated} and \cite{zhu2018low} that the signal (i.e., the local gradient) to be aggregated from each device is IID, and normalized with zero mean and unit variance.
By exploiting the sparsity pattern in gradient vectors, the work \cite{amiri2019federated} projects the gradient estimate in each device into a low-dimensional vector and transmits only the important gradient entries while accumulating the error from previous iterations.
Therein, a channel-inversion like power control scheme, similar to those in \cite{cao2019optimal,zhu2018low,yang2018federated} is designed so that the gradient vectors sent from the selected devices are aligned at the edge server.

Note that all the exiting works on power control for over-the-air FL have overlooked the following statistical characteristics of gradients:
\emph{the gradient distribution over training iterations is not necessarily identical; and even in the same iteration, the distribution of each entry of the gradient vector can be non-identical.}
A general observation is that the gradient distribution changes over iterations and is different in each feature dimension.
In addition, if the gradient distribution is unknown for each device, normalizing the gradient to a distribution with zero mean and unit variance is infeasible.
As such, due to the neglect of the above characteristics of gradient distribution, the existing power control methods for over-the-air FL may not perform efficiently in practice.

Motivated by the above issue, in this paper, we study the optimal power control problem for over-the-air FL in fading channels by taking gradient statistics into account.
Our goal is to minimize the MSE of the aggregated model at each iteration, and hence improve the accuracy of FL, by jointly optimizing the transmit power at each device and the denoising factor at the edge server given the first-order and second-order statistics of gradients.
The main contributions of this work are outlined below:
\begin{itemize}
\item\emph{Optimal power control with known gradient statistics:}
We first derive the MSE expression of gradient aggregation at each iteration of the model training when the first-order and second-order statistics of the gradient vectors are known.
We then formulate a joint optimization problem of transmit power at edge devices and denoising factor at the edge server for MSE minimization subject to individual peak power constraints at edge devices.
By decomposing this non-convex problem into subproblems defined on different subregions, we obtain the optimal power control strategy in closed form.
Unlike the existing threshold-based power control for normalized signal in \cite{cao2019optimal}, our optimal transmit power depends not only on the channel quality and noise power, but also heavily on the gradient statistics.
In particular, we find that the relative dispersion of the gradient values, i.e., the squared multivariate coefficient of variation (SMCV) of the gradient vectors, plays a key role in the optimal transmit power control.
Specifically, we prove that the optimal transmit power of each device is a continuous and monotonically decreasing function of the gradient SMCV.

\item\emph{Optimal power control in special cases:}
In the special case where the gradient SMCV approaches infinity, which could happen when the model training converges and/or the dataset is highly non-IID, we show that there is an optimal threshold for the device aggregation capability, below which the device transmits with full power and above which it transmits at the power to equalize the weight of its gradient for aggregation to one.
In the other special case where the gradient SMCV approaches zero, which could happen when the model training just begins, we show that the optimal power control is to let all the devices transmit with their peak powers.

\item\emph{Adaptive power control with unknown gradient statistics:}
We propose an adaptive power control algorithm that estimates the gradient statistics based on the historical aggregated gradients and then dynamically adjusts the power values in each iteration based on the estimation results.
The communication cost consumed by estimating the gradient statistics is negligible compared to the transmission of the entire gradient vector.
\end{itemize}

To evaluate the efficiency of the proposed power control scheme, we implement the FL in PyTorch for AI applications of three datasets: MNIST, CIFAR-10 and SVHN.
Experimental results demonstrate that the over-the-air FL with the proposed adaptive power control obtains higher model accuracy than the considered existing power control methods (full power transmission and threshold-based power control for normalized signal \cite{cao2019optimal}).
In particular, while the full power transmission performs poorly in high signal-noise ratio (SNR) region and non-IID data distribution and the threshold-based power control for normalized signal \cite{cao2019optimal} performs poorly in low SNR region and IID data distribution, the proposed power control can perform very well over a wide range of scenarios by exploiting the gradient statistics.

The rest of this paper is organized as follows.
The over-the-air FL system is modeled in Section II.
Section III describes the optimal power control strategy with known gradient statistics.
In Section IV, we introduce an adaptive power control scheme when the gradient statistics are unknown in advance.
Section V provides experimental results.
Finally we conclude the paper in Section VI.

\section{System Model}
\subsection{Federated Learning Over Wireless Networks}
\begin{figure}[t]
\begin{centering}
\vspace{-0.2cm}
\includegraphics[scale=.50]{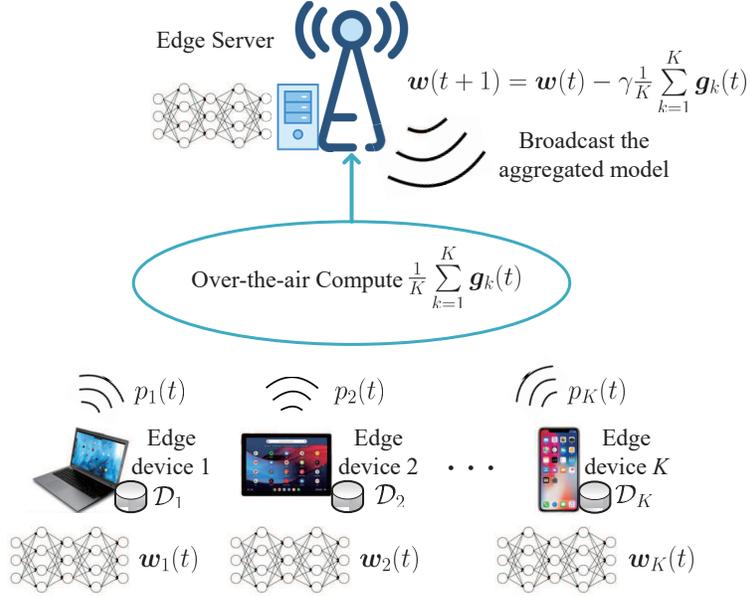}
\vspace{-0.1cm}
 \caption{\small{Illustration of over-the-air federated learning.}}\label{fig:system}
\end{centering}
\vspace{-0.3cm}
\end{figure}
We consider a wireless FL framework as illustrated in Fig.~\ref{fig:system}, where a shared AI model (e.g., a classifier) is trained collaboratively across $K$ edge devices via the coordination of an edge server.
Let $\mathcal{K}=\{1,...,K\}$ denote the set of edge devices.
Each device $k\in\mathcal{K}$ collects a fraction of labelled training data via interaction with its own users, constituting a local dataset, denoted as $\mathcal{S}_k$, which is unknown to the edge server.
Let $\bm{w}\in\mathbb{R}^D$ denote the $D$-dimensional model parameter to be learned.
The loss function measuring the model error is defined as
\begin{equation}
L(\bm{w})=\sum_{k\in\mathcal{K}}{\frac{|\mathcal{S}_k|}{|\mathcal{S}|}L_k(\bm{w})},
\label{equ:loss_function}
\end{equation}
where $L_k(\bm{w})=\frac{1}{|\mathcal{S}_k|}\sum_{i\in\mathcal{S}_k}{l_i(\bm{w})}$ is the loss function of device $k$ quantifying the prediction error of the model $\bm{w}$ on the local dataset collected at the $k$-th device, with $l_i(\bm{w})$ being the sample-wise loss function, and $\mathcal{S}=\bigcup_{k\in\mathcal{K}}{\mathcal{S}_k}$ is the union of all datasets.
The minimization of $L(\bm{w})$ is typically carried out through the stochastic gradient descent (SGD) algorithm, where device $k$'s local dataset $\mathcal{S}_k$ is split into mini-batches of size $B$ and at each iteration $t=1,2,...$, we draw one mini-batch $\mathcal{B}_k(t)$ randomly and update the model parameter as
\begin{equation}
\bm{w}(t+1)=\bm{w}(t)-\gamma\frac{1}{K}\sum_{k\in\mathcal{K}}{\nabla L_{k,t}^{SGD}(\bm{w}(t))},
\end{equation}
with $\gamma$ being the learning rate and $L_{k,t}^{SGD}(\bm{w})=\frac{1}{B}\sum_{i\in\mathcal{B}_k(t)}{l_i(\bm{w})}$.
Note that the mean of the gradient $\nabla L_{k,t}^{SGD}(\bm{w}(t))$ in SGD is equal to the gradient $\nabla L_k(\bm{w(t)})$ in gradient descent (GD).

\subsection{Over-the-Air Computation for Gradient Aggregation}
Let $\bm{g}_k(t)\triangleq\nabla L_{k,t}^{SGD}(\bm{w}(t))\in\mathbb{R}^D$ denote the gradient vector computed on device $k$ at iteration $t$.
The following are key assumptions on the distribution of each entry $g_{k,d}(t),d\in\{1,...,D\}$ of $\bm{g}_k(t)$:
\begin{itemize}
\item The gradient elements $\{g_{k,d}(t)\}, \forall k\in \mathcal{K}$, are independent and identically distributed over devices $k$'s.
Note that the local datasets are typically non-IID across devices in federated learning. However, due to privacy concerns, their actual distributions are usually unknown to the edge server, whether being identical or non-identical. As such and, by default, the distributions of the local gradients $\{g_{k,d}(t)\}$ trained from these local datasets are treated equally across devices from the edge server's perspective.

\item The gradient elements $\{g_{k,d}(t)\}, \forall t\in\mathbb{N}$, are non-identically distributed over iterations $t$'s.
In other words, the gradient distribution is non-stationary over time.
The non-stationary distribution is valid since the gradient values in general change rapidly at the beginning, then gradually approach to zero as the training goes on.

\item The gradient elements $\{g_{k,d}(t)\}, \forall d\in\{1,2,...,D\}$, are independent but non-identically distributed over gradient vector dimension $d$'s.
This assumption is valid as long as the features in a data sample are independent but non-identically distributed, which is typically the case.
\end{itemize}
The gradient of interest at the edge server at each iteration $t$ is given by
\begin{equation}
\label{equ:average_gradient}
\bm{g}(t)=\frac{1}{K}\sum_{k\in\mathcal{K}}{\bm{g}_k(t)}.
\end{equation}

To obtain (\ref{equ:average_gradient}), all the devices transmit their gradient vectors $\bm{g}_k(t)$ concurrently in an analog manner, following the AirComp principle as shown in Fig.~\ref{fig:system}.
We consider block fading channels, where the channel coefficients remain unchanged within the duration of each iteration in FL but may change independently from one iteration to another.
We define the duration of one iteration as one time block, indexed by $t\in\mathbb{N}$.
For simplicity, each edge device and the edge server are equipped with a single antenna.
Let $h_k(t)$ denote the complex-valued channel coefficient from device $k$ to the edge server at the $t$-th time block.
It is assumed to be generated from a stationary and ergodic process.
Each transmission block takes a duration of $D$ slots, one slot for one entry in the $D$-dimensional gradient vector.
Each gradient vector $\bm{g}_k(t)$ is multiplied with a pre-processing factor, denoted as $b_k(t)$.
The received signal vector at the edge server is given by
\begin{equation}
\bm{y}(t)=\sum_{k\in\mathcal{K}}{h_k(t)b_k(t)\bm{g}_k(t)}+\bm{n}(t),
\label{equ:received signal}
\end{equation}
where $\bm{n}(t)$ denotes the additive white Gaussian noise (AWGN) vector at the edge server with each element having zero mean and variance of $\sigma_n^2$.
Let $p_k(t)\geq0$ denote the transmit power of device $k\in\mathcal{K}$ at time block $t$.
To compensate the channel phase offset and scale the actual signal amplitude at each device, we design the pre-processing factor as $b_k(t)=\sqrt{\frac{p_k(t)}{\alpha_k(t)}}e^{-j\theta_k(t)}$, where $\theta_k(t)$ is the phase of $h_k(t)$ and $\alpha_k(t)\triangleq\mathbb{E}[\|\bm{g}_k(t)\|^2]$ is the mean squared norm of the gradient of device $k$.
Here, we have assumed that each device $k$ can estimate perfectly its own channel phase $\theta_k(t)$.
The mean squared norm of gradient, $\alpha_k(t)$ can be estimated with negligible communication costs as we shall present in Section IV.
To design the optimal power control policy $p_k(t)$, for $k\in\mathcal{K}$, we also assume that the edge server knows the channel amplitude $|h_k|$ of all devices.
By such design of $b_k(t)$, we can rewrite (\ref{equ:received signal}) as
\begin{equation}
\bm{y}(t)=\sum_{k\in\mathcal{K}}{\sqrt{\frac{p_k(t)}{\alpha_k(t)}}|h_k(t)|\bm{g}_k(t)}+\bm{n}(t).
\end{equation}
Each device $k\in\mathcal{K}$ has a peak power budget $P_k$, i.e.,
\begin{equation}
p_k(t)\leq P_k,\quad\forall{k}\in\mathcal{K}, \forall t\in\mathbb{N}.
\label{equ:power_constrain}
\end{equation}
Upon receiving $\bm{y}(t)$, the edge server applies a denoising factor, denoted by $\eta(t)$, to recover the gradient of interest as
\begin{equation}
\bm{\hat{g}}(t)=\frac{\bm{y}(t)}{K\sqrt{\eta(t)}},
\label{recover_gradient}
\end{equation}
where the factor $1/K$ is employed for the averaging purpose.

\subsection{Performance Measure}
We are interested in minimizing the distortion of the recovered gradient $\bm{\hat{g}}(t)$ with respect to (w.r.t.) the ground true gradient $\bm{g}(t)$ at each iteration $t$.
The distortion is measured by the instantaneous MSE defined as
\begin{align}
\text{MSE}(t)&\triangleq~\mathbb{E}[\|\bm{\hat{g}}(t)-\bm{g}(t)\|^2]\nonumber\\
&=~\frac{1}{K^2}\mathbb{E}\left[\left\| \frac{\bm{y}(t)}{\sqrt{\eta(t)}}-\sum_{k\in\mathcal{K}}{\bm{g}_k(t)} \right\|^2\right]\nonumber\\
&=~\frac{1}{K^2}\Bigg[\sum_{d=1}^D{\sigma_d^2(t)}\sum_{k\in\mathcal{K}}{\left(\frac{\sqrt{p_k(t)}|h_k(t)|}{\sqrt{\eta(t)\sum_{d=1}^D{(\sigma_d^2(t)+m_d^2(t))}}}-1\right)^2}\nonumber\\
&~~~~~~~~~~+\sum_{d=1}^D{m_d^2(t)}{\left(\frac{\sum_{k\in\mathcal{K}}\sqrt{p_k(t)}|h_k(t)|}{\sqrt{\eta(t)\sum_{d=1}^D{(\sigma_d^2(t)+m_d^2(t))}}}-K\right)^2}+\frac{D\sigma_n^2}{\eta(t)} \Bigg],
\label{equ:MSE}
\end{align}
where the expectation is over the distribution of the transmitted gradients $\bm{g}_k(t)$ and the received noise $\bm{n}(t)$, $m_d(t)$ and $\sigma_d^2(t)$ denote the mean (first-order statistics) and variance (second-order statistics) of the $d$-th entry of gradient $\bm{g}(t)$ at iteration $t$, respectively.
Note that in deriving (\ref{equ:MSE}), we have used the previous assumption that the gradient elements $\{g_{k,d}(t)\}, \forall k\in \mathcal{K}$, are IID across devices, i.e, $\alpha_k(t)=\mathbb{E}[\|\bm{g}_k(t)\|^2]=\mathbb{E}[\|\bm{g}(t)\|^2]=\sum_{d=1}^D{\left(\sigma_d^2(t)+m_d^2(t)\right)}$.

Observing (\ref{equ:MSE}) closely, we find that the MSE consists of three components, representing the individual misalignment error (the first term), the composite misalignment error (the second term), and the noise-induced error (the third term), respectively.
The individual misalignment error is weighted by the gradient variance $\sum_{d=1}^D{\sigma_d^2(t)}$ while the composite misalignment error is weighted by the gradient mean $\sum_{d=1}^D{m_d^2(t)}$.
In the special case when the gradient has zero mean, the MSE in (\ref{equ:MSE}) reduces to that in \cite{cao2019optimal} where the composite misalignment error is absent.
Our objective is to minimize MSE in (\ref{equ:MSE}), by jointly optimizing the transmit power $p_k(t)$ at all devices and the denoising factor $\eta(t)$ at the edge server, subject to the individual power budget in (\ref{equ:power_constrain}) at each iteration $t$.

\subsection{Gradient Statistics}
In general, the individual misalignment error and the composite misalignment error in MSE of the gradient aggregation (\ref{equ:MSE}) cannot be forced to zero simultaneously due to the peak power budget on each device.
It is difficult to capture the tradeoff between the two errors by directly using their respective weights, namely, the gradient variance and gradient mean.
To tackle this issue, we introduce two alternative parameters of gradient statistics in this subsection.

Let $\alpha(t)$ denote the mean squared norm (MSN) of $\bm{g}(t)$, i.e., $\mathbb{E}[\|\bm{g}(t)\|^2]$, which is given by
\begin{equation}
\alpha(t)=\sum_{d=1}^D{\bigg(\sigma_d^2(t)+m_d^2(t)\bigg)},
\label{equ:alpha}
\end{equation}
and let $\beta(t)$ denote the squared multivariate coefficient of variation (SMCV) of $\bm{g}(t)$, which is given by
\begin{equation}
\beta(t)=\frac{\sum_{d=1}^D{\sigma_d^2(t)}}{\sum_{d=1}^D{m_d^2(t)}}.
\label{equ:beta}
\end{equation}
While $\alpha(t)$ measures the average absolute value of the gradient values at iteration $t$, $\beta(t)$ is a measure of relative dispersion of the gradient at iteration $t$.
In Section III-B, we shall show that the optimal transmit power is dominated by $\beta(t)$ and independent of $\alpha(t)$.
Fig.~\ref{fig:alpha_beta} illustrates the experimental results of the alternative gradient statistics $\alpha(t)$ and $\beta(t)$ of three datasets, MNIST, CIFAR-10, and SVHN, where the gradients are updated ideally without any transmission error.
Both IID and non-IID partitions are considered for the training dataset and the generation of IID and non-IID data distributions is specified in Section V.
Each value of $\alpha(t)$ and $\beta(t)$ is obtained by averaging over 300 model trainings.
It is observed that as the training time increases, the gradient MSN $\alpha(t)$ gradually decreases and approaches to a constant while the gradient SMCV $\beta(t)$ keeps increasing from a small starting point for all the three datasets.
Intuitively, in SGD-based learning, the gradient mean $\{m_d(t)\}$ is large at the begining, then gradually approaches to zero when the model converges; the gradient variance $\{\sigma_d^2(t)\}$, on the other hand, remains approximately unchanged due to the randomness of local datasets throughout the training process.
As a result, according to the definitions in (\ref{equ:alpha}) and (\ref{equ:beta}), the absolute squared norm of the gradient, $\alpha(t)$, decreases during the training and converges to a constant , but the relative dispersion of the gradient, $\beta(t)$, is very small when the training just begins and then gradually increases when the training continues.
It is also observed that $\alpha(t)$ and $\beta(t)$ in non-IID partition are much larger than those in IID partition for all the three datasets.
This indicates that the gradient distribution with non-IID dataset partition is more dispersive than that with IID dataset partition as expected.
Note that the gradient MSN $\alpha(t)$ of CIFAR-10 and SVHN datasets are larger than MNIST dataset.
This is because the dimensions of model parameters for training CIFAR-10 and SVHN datasets are larger than MNIST dataset.

\begin{figure}[htb]
\centering

\subfigure[MNIST dataset.]
{\begin{minipage}[t]{0.32\linewidth}
\centering
\includegraphics[width=2.2in]{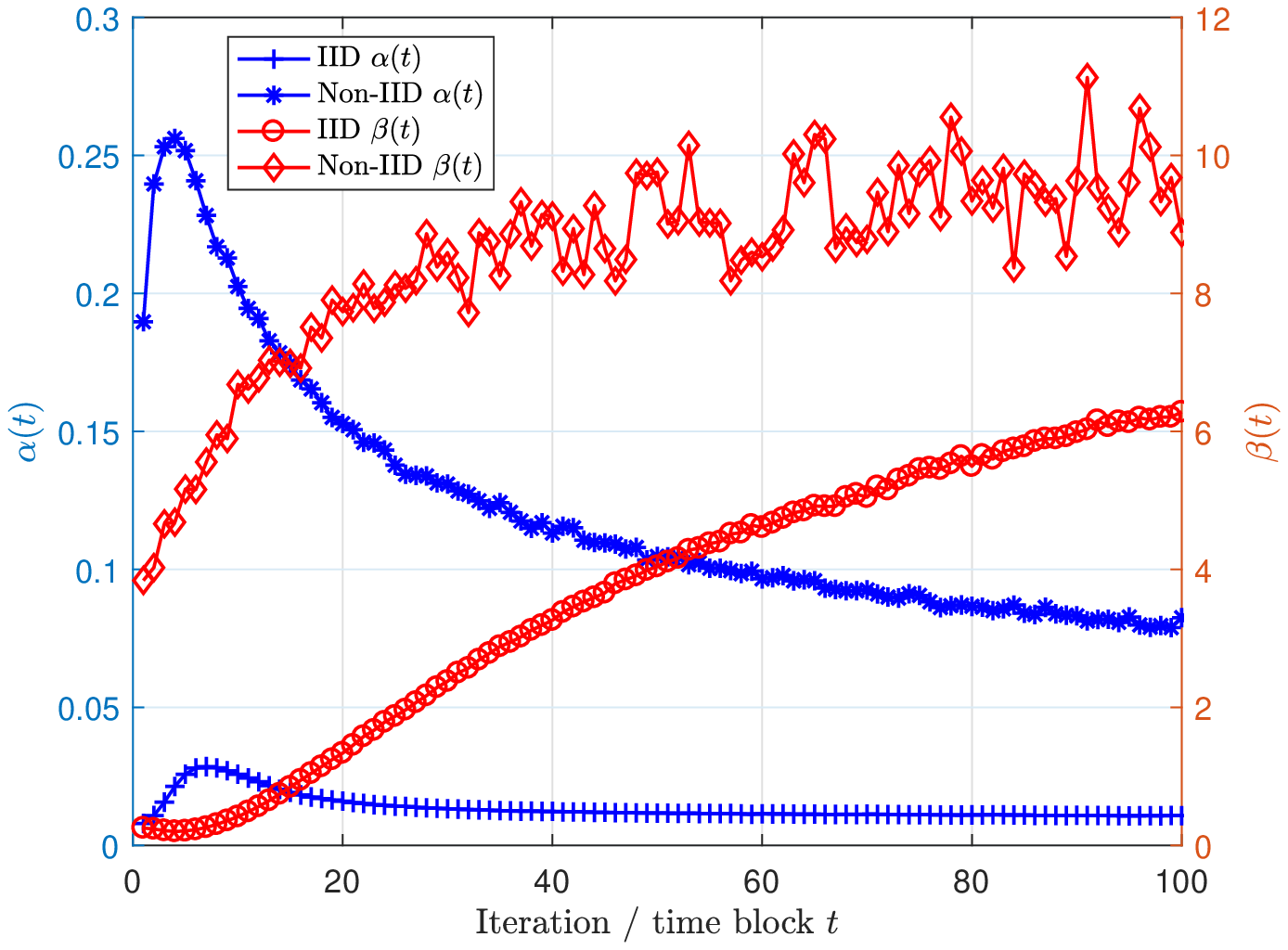}
\label{fig:alpha_beta_mnist}
\end{minipage}}
\subfigure[CIFAR-10 dataset.]
{\begin{minipage}[t]{0.32\linewidth}
\centering
\includegraphics[width=2.2in]{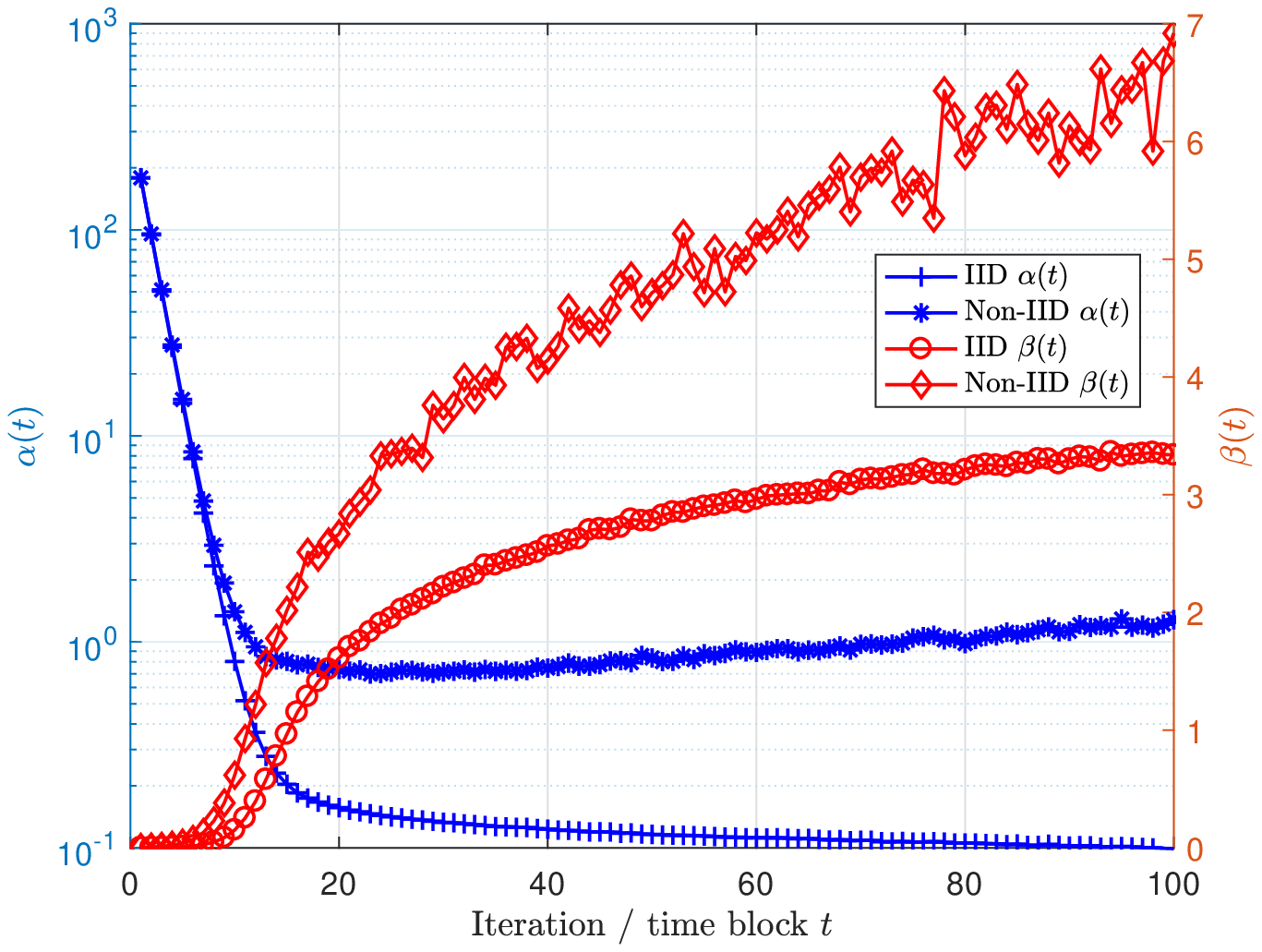}
\label{fig:alpha_beta_cifar}
\end{minipage}}
\subfigure[SVHN dataset.]
{\begin{minipage}[t]{0.32\linewidth}
\centering
\includegraphics[width=2.2in]{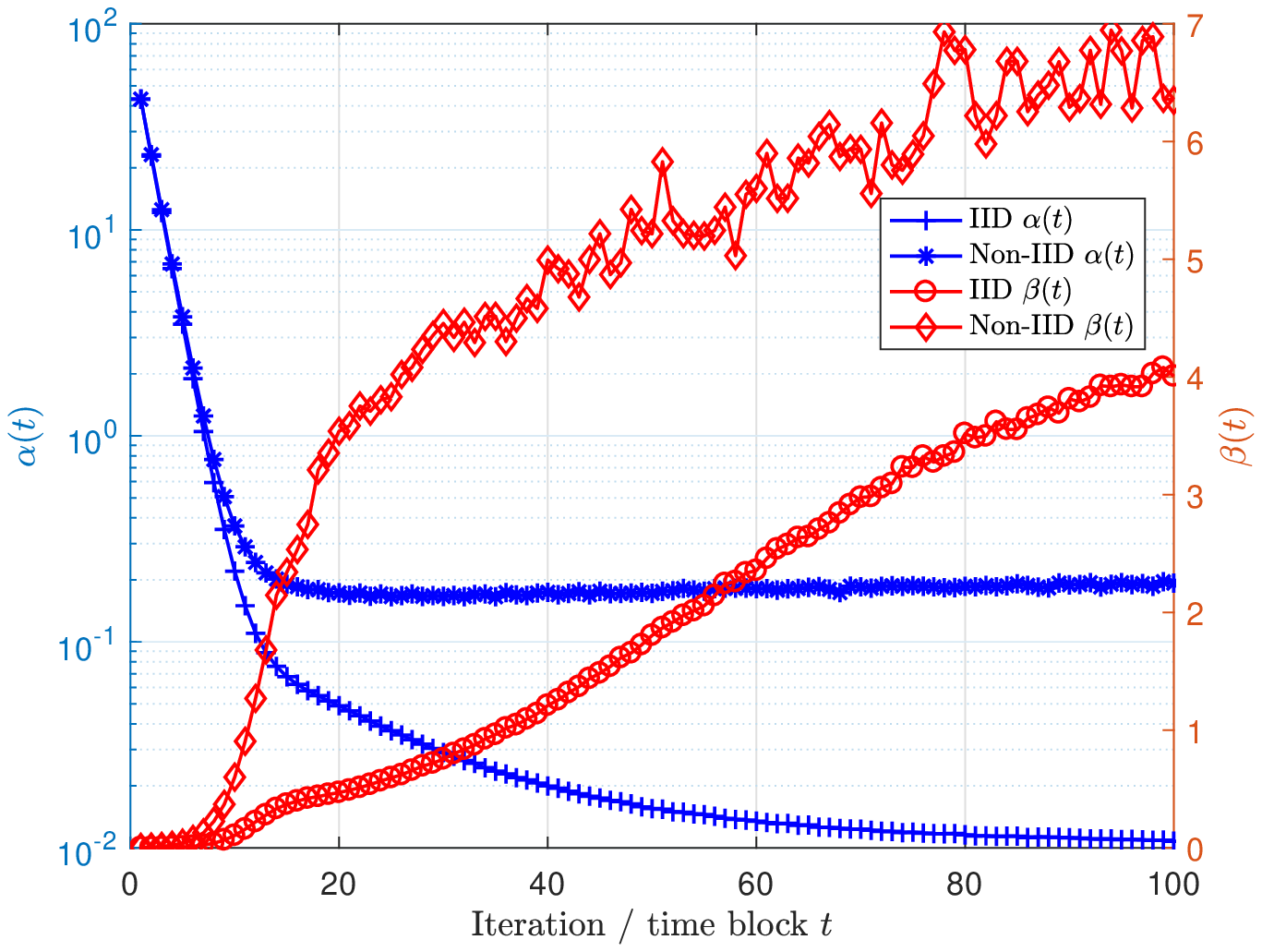}
\label{fig:alpha_beta_SVHN}
\end{minipage}}

\centering
\caption{Experimental results of MSN $\alpha(t)$ (left y-axis in linear scale) and SMCV $\beta(t)$ (right y-axis) over iterations for three datasets where the number of edge devices is 10 and the local mini-batch size is 20.}
\label{fig:alpha_beta} %% label for entire figure
\end{figure}

By (\ref{equ:alpha}) and (\ref{equ:beta}), the MSE in (\ref{equ:MSE}) can be rewritten as (omitting the constant coefficient $1/K^2$ for convenience)
\begin{align}
\text{MSE}(t)=&\frac{\beta(t)\alpha(t)}{\beta(t)+1}\sum_{k\in\mathcal{K}}{\left(\frac{\sqrt{p_k(t)}|h_k(t)|}{\sqrt{\eta(t)\alpha(t)}}-1\right)^2}\nonumber\\
&+\frac{\alpha(t)}{\beta(t)+1}{\left(\frac{\sum_{k\in\mathcal{K}}\sqrt{p_k(t)}|h_k(t)|}{\sqrt{\eta(t)\alpha(t)}}-K\right)^2}+\frac{D\sigma_n^2}{\eta(t)}.
\label{equ:MSE_reformulation}
\end{align}
It is seen from (\ref{equ:MSE_reformulation}) that while the gradient MSN $\alpha(t)$ appears linearly in the weights of both individual and composite misalignment errors, the gradient SMCV $\beta(t)$ plays a more distinguishing role in the MSE expression.
In particular, when $\beta(t)\rightarrow 0$, which could be the case when the model training just begins (as could be verified by Fig.~\ref{fig:alpha_beta}, the individual signal misalignment error can be neglected.
When $\beta(t)\rightarrow\infty$, which could be the case when the model training converges or during the middle of the training when the dataset is highly non-IID (as could be verified by Fig.~\ref{fig:alpha_beta}, the composite signal misalignment error disappears.

\section{Optimal Power Control with Known Gradient Statistics}
In this section, we formulate and solve the optimal power control problem for minimizing MSE when the gradient statistics $\alpha(t)$ and $\beta(t)$ are known.
For convenience, we omit iteration index $t$ in this section.
For each device $k\in\mathcal{K}$, we define its \emph{aggregation level} with power $p$ and denoising factor $\eta$ as
\begin{equation}
G_k(p,\eta)=\sqrt{\frac{p}{\eta\alpha}}|h_k|,
\end{equation}
which represents the weight of the gradient from device $k$ in the global gradient aggregation (\ref{recover_gradient})\footnote{The weight should be 1 for all devices in the ideal case.}.
Furthermore, we define \emph{aggregation capability} of device $k$ as its aggregation level with peak power $P_k$ and unit denoising factor $\eta=1$, i.e., $C_k=G_k(P_k,1)=\sqrt{\frac{P_k}{\alpha}}|h_k|$.
Without loss of generality, we assume that
\begin{equation}
C_1\leq...\leq C_k\leq...\leq C_K.
\label{equ:sort}
\end{equation}

\subsection{Power Control Problem for General Case}
In this subsection, we consider the optimal power control problem for MSE minimization in the general case.
The problem is formulated as
\begin{subequations}
\begin{align}
\mathcal{P}_1:\quad\min\quad&\frac{\beta\alpha}{\beta+1}\sum_{k\in\mathcal{K}}{\left(G_k(p_k,\eta)-1\right)^2}+\frac{\alpha}{\beta+1}{\left(\sum_{k\in\mathcal{K}}G_k(p_k,\eta)-K\right)^2}+\frac{D\sigma_n^2}{\eta} \label{equ:problem_formulation}\\
\mathnormal{s.t.}\quad &~~~~~~~~~~~~~~~~~~~~0\leq p_k\leq P_k,~~\forall k\in\mathcal{K}\\
&~~~~~~~~~~~~~~~~~~~~\eta\geq 0.
\label{equ:limit_BS}
\end{align}
\end{subequations}
Different from the power control problem in \cite{cao2019optimal}, the objective function in (\ref{equ:problem_formulation}) contains not only the individual misalignment error ($\frac{\beta\alpha}{\beta+1}\sum_{k\in\mathcal{K}}{(G_k(p_k,\eta)-1)^2}$), but also the composite misalignment error  ($\frac{\alpha}{\beta+1}{(\sum_{k\in\mathcal{K}}G_k(p_k,\eta)-K)^2}$).
Problem $\mathcal{P}_1$ is non-convex in general.
Even if the denoising factor $\eta$ is given, problem $\mathcal{P}_1$ is still hard to solve due to the coupling of each transmit power $p_k$.
In the following, we present some properties of the optimal solution.

\begin{lemma}
\label{lemma:eta_lower_bound}
The optimal denoising factor $\eta^*$ for problem $\mathcal{P}_1$ satisfies $\eta^*\geq C_1^2$.
\end{lemma}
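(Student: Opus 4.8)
The plan is to compare the best achievable value of the objective in $\mathcal{P}_1$ across different denoising factors. Concretely, I would show that whenever $\eta\in(0,C_1^2]$, the minimum of the objective over the transmit powers $\{p_k\}$ is exactly $D\sigma_n^2/\eta$, which is strictly decreasing in $\eta$; this immediately excludes any optimal $\eta^*$ strictly below $C_1^2$.

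First I would note that the two misalignment terms in (\ref{equ:problem_formulation}) are nonnegative, so for every feasible point the objective is bounded below by $D\sigma_n^2/\eta$. To see that this bound is tight when $\eta\le C_1^2$, I would invoke the channel-inversion allocation $p_k=\eta\alpha/|h_k|^2$, which yields $G_k(p_k,\eta)=1$ for all $k$. Feasibility of this choice is equivalent to $\eta\alpha/|h_k|^2\le P_k$, i.e.\ $\eta\le P_k|h_k|^2/\alpha=C_k^2$, and this holds for every $k$ because $\eta\le C_1^2\le C_k^2$ by the ordering (\ref{equ:sort}). With $G_k\equiv 1$ the individual misalignment term vanishes and $\sum_{k\in\mathcal{K}}G_k=K$ annihilates the composite misalignment term, so the objective reduces to $D\sigma_n^2/\eta$. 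Combined with the lower bound, the minimum of the objective over $\{p_k\}$ equals $D\sigma_n^2/\eta$ for every $\eta\in(0,C_1^2]$.

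The conclusion then follows by monotonicity: if some $\eta'<C_1^2$ were optimal, then switching to $\eta=C_1^2$ together with the corresponding channel-inversion powers (still feasible, with $p_1=P_1$) would achieve the smaller value $D\sigma_n^2/C_1^2<D\sigma_n^2/\eta'$, contradicting optimality. Hence $\eta^*\ge C_1^2$. I do not anticipate a real obstacle here; the only points that need a little care are verifying that the range of $p\mapsto G_k(p,\eta)$ on $[0,P_k]$ contains the value $1$ precisely when $\eta\le C_k^2$, so channel inversion is admissible, and noting that the objective diverges as $\eta\to 0^+$ while remaining bounded as $\eta\to\infty$ (there $G_k\to 0$), which ensures an optimizer $\eta^*$ actually exists in $[C_1^2,\infty)$ rather than being pushed to the degenerate boundary $\eta=0$.
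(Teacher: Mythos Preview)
Your proposal is correct and follows essentially the same argument as the paper: for $\eta\le C_1^2$ the channel-inversion allocation $p_k=\eta\alpha/|h_k|^2$ is feasible and zeroes out both misalignment terms, so the objective reduces to the strictly decreasing function $D\sigma_n^2/\eta$, forcing $\eta^*\ge C_1^2$. Your write-up is in fact a bit more careful than the paper's (explicitly verifying feasibility via the ordering and noting existence of an optimizer), but the underlying idea is identical.
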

\begin{proof}
Please refer to Appendix \ref{proof:lemma:eta_lower_bound}.
\end{proof}

Lemma \ref{lemma:eta_lower_bound} reduces the range of $\eta^*$ and shows that the optimal transmit power of the $1$-st device is $p^*_1=P_1$.
\begin{lemma}
\label{lemma:optimal_power_constrain}
The optimal power control policy satisfies $p^*_k=P_k,\forall{k}\in\{1,...,l\}$ and $p^*_k<P_k,\forall{k}\in\{l+1,...,K\}$ for some $l\in\mathcal{K}$.
\end{lemma}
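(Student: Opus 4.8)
The plan is to first fix the denoising factor at its optimal value $\eta^*$ and then read off the structure of the optimal powers from a one-step exchange argument on the resulting problem. By Lemma~\ref{lemma:eta_lower_bound}, $\eta^*$ is attained and satisfies $\eta^*\geq C_1^2>0$, so dividing by $\sqrt{\eta^*}$ is harmless. It is cleanest to change variables from the transmit powers $p_k$ to the aggregation levels $G_k\triangleq\sqrt{p_k/(\eta^*\alpha)}\,|h_k|$: the map $p_k\mapsto G_k$ is a strictly increasing bijection of $[0,P_k]$ onto $[0,C_k/\sqrt{\eta^*}]$, so $p_k\leq P_k$ is the same as $G_k\leq C_k/\sqrt{\eta^*}$, and $p_k^*=P_k$ holds if and only if $G_k^*=C_k/\sqrt{\eta^*}$. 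Crucially, by the ordering (\ref{equ:sort}) these caps $C_k/\sqrt{\eta^*}$ are non-decreasing in $k$. In these variables the objective (\ref{equ:problem_formulation}) is a convex quadratic in $\{G_k\}$ whose composite-misalignment term depends on $\{G_k\}$ only through $\sum_k G_k$ and whose noise term does not depend on $\{G_k\}$ at all.

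Suppose, towards a contradiction, that the full-power set $\mathcal{F}\triangleq\{k:p_k^*=P_k\}$ is not an initial segment $\{1,\dots,l\}$ of $\mathcal{K}$. Then there exist indices $i<j$ with $p_i^*<P_i$ and $p_j^*=P_j$, i.e.\ $G_i^*<C_i/\sqrt{\eta^*}$ and $G_j^*=C_j/\sqrt{\eta^*}$; since $i<j$, monotonicity of the caps gives $G_i^*<C_i/\sqrt{\eta^*}\leq C_j/\sqrt{\eta^*}=G_j^*$, hence $G_i^*<G_j^*$. Now perturb only these two coordinates, setting $\tilde G_i=G_i^*+\delta$ and $\tilde G_j=G_j^*-\delta$ for a small $\delta>0$ while keeping every other $G_k$ and $\eta=\eta^*$ fixed; this is feasible for any $\delta<\min\{C_i/\sqrt{\eta^*}-G_i^*,\ C_j/\sqrt{\eta^*}\}$, the second term being positive because $C_j\geq C_1>0$. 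Since $\sum_k G_k$ and $\eta$ are unchanged, the only change in the objective is in the individual-misalignment term, and a direct expansion gives
\begin{align*}
\Delta&=\frac{\beta\alpha}{\beta+1}\Big[(G_i^*+\delta-1)^2+(G_j^*-\delta-1)^2-(G_i^*-1)^2-(G_j^*-1)^2\Big]\\
&=\frac{2\beta\alpha}{\beta+1}\big(\delta(G_i^*-G_j^*)+\delta^2\big).
\end{align*}
As $\beta,\alpha>0$ and $G_i^*-G_j^*<0$, taking $\delta$ small enough makes $\Delta<0$, contradicting the optimality of $(\{p_k^*\},\eta^*)$ for $\mathcal{P}_1$. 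Hence $\mathcal{F}$ is an initial segment; since $1\in\mathcal{F}$ (recall $p_1^*=P_1$ from Lemma~\ref{lemma:eta_lower_bound}), putting $l=\max\mathcal{F}\in\mathcal{K}$ yields $p_k^*=P_k$ for $k\in\{1,\dots,l\}$ and $p_k^*<P_k$ for $k\in\{l+1,\dots,K\}$.

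The argument is short; the points that need care are that the perturbation stays feasible — which is precisely why we need the strict inequality $G_i^*<C_i/\sqrt{\eta^*}$ together with $C_j>0$ — and the implicit non-degeneracy assumptions $0<\beta<\infty$ and $C_k>0$ for all $k$ (the limiting regimes $\beta\to0$ and $\beta\to\infty$ are treated separately in the paper). It is also worth stressing that it suffices to perturb the powers alone with $\eta$ frozen at $\eta^*$: a feasible perturbation of this restricted type that strictly decreases the objective already contradicts joint optimality, so $\eta$ never has to be re-optimized. An equivalent route writes the KKT conditions of the convex problem in $\{G_k\}$ for fixed $\eta^*$: introducing $\tau\triangleq 1-\tfrac{1}{\beta}\big(\sum_k G_k^*-K\big)$, one gets $G_k^*=\min\{C_k/\sqrt{\eta^*},\,\tau\}$ with $\tau>0$, so device $k$ transmits at peak power exactly when $C_k/\sqrt{\eta^*}\leq\tau$, and the initial-segment structure follows immediately from the monotonicity of $\{C_k\}$.
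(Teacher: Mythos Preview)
Your proof is correct and follows essentially the same approach as the paper: both fix $\eta=\eta^*$, assume an index pair $i<j$ with $p_i^*<P_i$ and $p_j^*=P_j$, and apply a sum-preserving exchange in the aggregation levels so that only the individual-misalignment term changes, yielding a strict improvement. The only cosmetic differences are that the paper pushes $p_i$ all the way to $P_i$ in one finite step (and lowers $p_j$ to compensate) while you use an infinitesimal $\delta$-perturbation, and you add the optional KKT reformulation at the end; neither changes the substance of the argument.
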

\begin{proof}
Please refer to Appendix \ref{proof:lemma:optimal_power_constrain}.
\end{proof}

Based on Lemma \ref{lemma:optimal_power_constrain}, solving problem $\mathcal{P}_1$ can be equivalent to minimizing the objective function in the following $K$ exclusive subregions of the global power region, denoted as $\{\mathcal{M}_l\}_{l\in\mathcal{K}}$ and comparing their corresponding optimal solutions to obtain the global optimal solution:
\begin{align}
\mathcal{M}_l=\bigg\{[p_1,\cdots,p_K]\in\mathbb{R}^K|&p_k=P_k,\forall{k}\in\{1,...,l\},0\leq p_k<P_k,\forall{k}\in\{l+1,...,K\}\bigg\}.
\label{equ:domain}
\end{align}
To facilitate the derivation, we denote $\tilde{\mathcal{M}_l}$ as a relaxed region of $\mathcal{M}_l$ by removing the condition $p_k<P_k$, for $k\in\{l+1, ...,K\}$, i.e.,
\begin{align}
\tilde{\mathcal{M}_l}=\bigg\{[p_1,\cdots,p_K]\in\mathbb{R}^K|&p_k=P_k,\forall{k}\in\{1,...,l\},p_k\geq 0,\forall{k}\in\{l+1,...,K\}\bigg\}.
\label{equ:domain}
\end{align}
For the sub-problem defined in each relaxed subregion $\tilde{\mathcal{M}_l}$, for $l\in\mathcal{K}$, taking the derivative of the objective function (\ref{equ:problem_formulation}) w.r.t. $p_k$ and equating it to zero for all $k\in\{l+1,...,K\}$, we obtain the optimal transmit power $\tilde{p}^*_{l,k}$ at any given $\eta$ as
\begin{equation}
\tilde{p}^*_{l,k}=\left[\frac{\beta+K-\sum_{i=1}^{l}{G_i(P_{i},\eta)}}{\beta+K-l}\right]^2\cdot\frac{\alpha\eta}{|h_k|^2}, k\in\{l+1,...,K\}.
\label{equ:power_control_subregion}
\end{equation}
Note that by such power control in (\ref{equ:power_control_subregion}), the aggregation level $G_k(\tilde{p}^*_{l,k},\eta)$ of each device $k\in\{l+1,...,K\}$ is the same, given by
\begin{align}
G_0(l)=\frac{\beta+K-\frac{1}{\sqrt{\eta}}\sum_{i=1}^l{C_i}}{\beta+K-l}.
\label{equ:equal_aggregation_level}
\end{align}
Substituting (\ref{equ:power_control_subregion}) back to (\ref{equ:problem_formulation}) and letting its derivative w.r.t. $\eta$ be zero, we derive the optimal denoising factor $\eta$ in closed-form for problem $\mathcal{P}_1$ defined in the $l$-th relaxed subregion, i.e.,
\begin{align}
\sqrt{\tilde{\eta}^*_l}=\frac{\frac{\beta\alpha}{\beta+1}\sum\limits_{i=1}^{l}C_i^2+\frac{\beta\alpha}{(\beta+K-l)(\beta+1)}\bigg(\sum\limits_{i=1}^{l}C_i\bigg)^2+D\sigma_n^2}{\frac{\beta(\beta+K)\alpha}{(\beta+K-l)(\beta+1)}\sum\limits_{i=1}^{l}C_i}.
\label{equ:eta_function_subregion}
\end{align}
Note that $\tilde{p}^*_{l,k}$ may be not less than its power constraint $P_k$ for some $k\in\{l+1,...,K\}$ and thus the corresponding $\bm{\tilde{p}}^*_l$ may not lie in the subregion $\mathcal{M}_l$.
If this happens, the optimal transmit power of the sub-problem defined in subregion $\mathcal{M}_l$ is irrelevant and does not need to be considered.
This is given in the following lemma.
\begin{lemma}
\label{lemma:illegal_solution}
For the power control problem $\mathcal{P}_1$ defined in the $l$-th relaxed subregion $\tilde{\mathcal{M}_l}$, if $\exists{k}\in\{l+1,...,K\}$ such that $\tilde{p}^*_{l,k}\geq P_k$, the global optimal power $\bm{p}^*\triangleq [p^*_1,...p^*_K]$ of problem $\mathcal{P}_1$ must not be in $\mathcal{M}_l$.
\end{lemma}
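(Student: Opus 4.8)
The plan is to argue by contraposition: assume the global optimum $\bm{p}^*$ of $\mathcal{P}_1$ lies in $\mathcal{M}_l$, and show this forces $\bm{p}^*=\bm{\tilde{p}}^*_l$, so that $\bm{\tilde{p}}^*_l\in\mathcal{M}_l$, i.e., $\tilde{p}^*_{l,k}<P_k$ for all $k\in\{l+1,\dots,K\}$, which is precisely the negation of the hypothesis.

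As a preliminary step I would rewrite the hypothesis in terms of aggregation capabilities. Using $G_i(P_i,\eta)=C_i/\sqrt{\eta}$ and $P_k=C_k^2\alpha/|h_k|^2$, dividing (\ref{equ:power_control_subregion}) by $P_k$ gives $\tilde{p}^*_{l,k}/P_k=G_0(l)^2\tilde{\eta}^*_l/C_k^2$, with $G_0(l)$ the common aggregation level (\ref{equ:equal_aggregation_level}) evaluated at $\eta=\tilde{\eta}^*_l$; hence $\tilde{p}^*_{l,k}\ge P_k$ is equivalent to $C_k\le G_0(l)\sqrt{\tilde{\eta}^*_l}$. Since the devices are sorted as in (\ref{equ:sort}), the set of indices $k>l$ satisfying this is a prefix of $\{l+1,\dots,K\}$, so if the hypothesis holds for some $k$ it holds in particular for $k=l+1$; it therefore suffices to contradict $\tilde{p}^*_{l,l+1}\ge P_{l+1}$ together with $\bm{p}^*\in\mathcal{M}_l$.

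The core of the proof would use first-order conditions. Let $\eta^*$ be the denoising factor accompanying $\bm{p}^*$. By Lemma~\ref{lemma:eta_lower_bound}, $\eta^*\ge C_1^2>0$; by the definition of $\mathcal{M}_l$, $p^*_k<P_k$ strictly for every $k>l$; and a short argument rules out $p^*_k=0$ at the optimum (treated separately below). Hence devices $l+1,\dots,K$ and the denoising factor sit at interior points of the feasible set, so the stationarity conditions $\partial(\mathrm{obj})/\partial p_k=0$ for $k>l$ and $\partial(\mathrm{obj})/\partial\eta=0$ hold at $(\bm{p}^*,\eta^*)$. Solving the $p_k$-equations jointly reproduces exactly (\ref{equ:power_control_subregion}), so $p^*_k=\tilde{p}^*_{l,k}(\eta^*)$ for all $k>l$. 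Substituting these relations back into the objective and applying the chain rule, the $\eta$-equation becomes $\Phi'(\eta^*)=0$, where $\Phi(\eta)$ is the objective restricted to the curve $\bm{p}=\bm{\tilde{p}}^*_l(\eta)$. Now $\Phi$, viewed as a function of $u=1/\sqrt{\eta}>0$, is a quadratic with strictly positive leading coefficient (after substitution each of its summands is a polynomial of degree at most two in $u$), hence strictly convex with a unique stationary point, which by construction is $u=1/\sqrt{\tilde{\eta}^*_l}$, i.e., $\eta^*=\tilde{\eta}^*_l$ and therefore $\bm{p}^*=\bm{\tilde{p}}^*_l$. But $\bm{p}^*\in\mathcal{M}_l$ requires $p^*_{l+1}<P_{l+1}$, contradicting $\tilde{p}^*_{l,l+1}\ge P_{l+1}$. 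This contradiction establishes the lemma.

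The step I expect to be the main obstacle is ruling out the degenerate case $p^*_k=0$ for some $k>l$, which must be discharged before the interior stationarity conditions can be invoked. I would handle it by examining the one-sided derivative of the objective with respect to $p_k$ as $p_k\downarrow 0$: the $\sqrt{p_k}$ terms make this derivative tend to $-\infty$, its leading sign being that of $\sum_{i\ne k}G_i-K-\beta$, which is negative at any optimum because the full-power devices $1,\dots,l$ under-aggregate by construction; hence switching on device $k$ strictly decreases the objective, so $p^*_k=0$ cannot be optimal. The remaining ingredients — combining the per-device stationarity conditions and recognizing the reduced $\eta$-problem as a quadratic in $1/\sqrt{\eta}$ — are routine given the ordering (\ref{equ:sort}) and Lemma~\ref{lemma:eta_lower_bound}.
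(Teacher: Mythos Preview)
Your approach is essentially the same as the paper's: both argue that if the global optimum $\bm{p}^*$ lies in $\mathcal{M}_l$, then interior first-order stationarity in the free variables forces $\bm{p}^*=\bm{\tilde{p}}^*_l$, contradicting $\bm{\tilde{p}}^*_l\notin\mathcal{M}_l$. Your version is actually more careful than the paper's (which tacitly assumes interiority and glosses over the $p_k=0$ boundary and the $\eta$-variable); one small remark is that your key inequality $\sum_{i\ne k}G_i<K+\beta$ is most cleanly obtained from the $\eta$-stationarity condition at the optimum rather than from the informal ``under-aggregation by construction'' phrasing.
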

\begin{proof}
Please refer to Appendix \ref{proof:lemma:illegal_solution}.
\end{proof}

Lemma \ref{lemma:illegal_solution} shows that only the transmit power vectors $\tilde{\bm{p}}^*_l$'s with elements satisfying $\tilde{p}^*_{l,k}<P_k,\forall{k}\in\{l+1,...,K\}$ are legal candidates of problem $\mathcal{P}_1$.
Let $\mathcal{L}$ denote the index set of the corresponding relaxed subregions.
Note that $\mathcal{L}$ is non-empty because $\tilde{\bm{p}}^*_K$ is always a legal transmit power candidate.
Then, we only need to compare the legal candidate values to obtain the minimum MSE
\begin{equation}
l^*=\arg\min_{l\in\mathcal{L}}V_{l},
\label{equ:optimal_tau}
\end{equation}
where $V_l$ is the optimal value of (\ref{equ:problem_formulation}) in relaxed subregion $\tilde{\mathcal{M}_l}$ of $\mathcal{P}_1$ and can be easily obtained by substituting (\ref{equ:power_control_subregion}) and (\ref{equ:eta_function_subregion}) back to (\ref{equ:problem_formulation}).
The optimal solution to problem $\mathcal{P}_1$ is given as follows.
\begin{theorem}
\label{theorem:optimality}
The optimal transmit power at each device that solves problem $\mathcal{P}_1$ is given by
\begin{equation}
p_k^*=
\left\{
             \begin{array}{ll}
             P_k, &\forall{k}\in\{1,...,l^*\}\\
             \left[\frac{\beta+K-\sum_{i=1}^{l^*}{G_i(P_{i},\eta^*)}}{\beta+K-l^*}\right]^2\cdot\frac{\alpha\eta^*}{|h_k|^2}, &\forall{k}\in\{l^*+1,...,K\},
             \end{array}
\right.
\label{equ:power_control}
\end{equation}
and the optimal denoising factor at the edge server is given by
\begin{align}
\sqrt{\eta^*}=\frac{\frac{\beta\alpha}{\beta+1}\sum\limits_{i=1}^{l^*}C_i^2+\frac{\beta\alpha}{(\beta+K-l^*)(\beta+1)}\bigg(\sum\limits_{i=1}^{l^*}C_i\bigg)^2+D\sigma_n^2}{\frac{\beta(\beta+K)\alpha}{(\beta+K-l^*)(\beta+1)}\sum\limits_{i=1}^{l^*}C_i},
\label{equ:eta_function}
\end{align}
where $l^*$ is given in (\ref{equ:optimal_tau}).
\end{theorem}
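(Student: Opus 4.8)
The plan is to prove Theorem~\ref{theorem:optimality} by combining the structural Lemmas~\ref{lemma:eta_lower_bound}--\ref{lemma:illegal_solution} so as to reduce the non-convex problem $\mathcal{P}_1$ to a finite comparison over $K$ tractable subproblems. First I would invoke Lemma~\ref{lemma:optimal_power_constrain}: there exists $l\in\mathcal{K}$ with $\bm{p}^*\in\mathcal{M}_l$, so the optimal value of $\mathcal{P}_1$ equals the smallest, over $l\in\mathcal{K}$, of the minima of the objective restricted to $\mathcal{M}_l$. For a fixed $l$, I would pass to the closed relaxation $\tilde{\mathcal{M}_l}$; a minimizer over $\mathcal{M}_l$ is also a candidate minimizer over $\tilde{\mathcal{M}_l}$, and — crucially — Lemma~\ref{lemma:illegal_solution} says that whenever the relaxed minimizer $\tilde{\bm{p}}^*_l$ fails the strict constraints (some $\tilde{p}^*_{l,k}\ge P_k$), the true optimizer $\bm{p}^*$ cannot lie in $\mathcal{M}_l$ at all, so that index is simply dropped. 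This leaves exactly the legal index set $\mathcal{L}$.

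Second, I would solve the relaxed $l$-th subproblem in closed form in two stages. With $\eta$ fixed, substitute $x_k=\sqrt{p_k}$ so that each aggregation level $G_k$ is affine in $x_k$; the objective then becomes a nonnegative combination of squared affine functions of $(x_{l+1},\dots,x_K)$, hence a convex quadratic, and its stationarity conditions $\partial/\partial p_k=0$, $k>l$, force all free devices to a common aggregation level, which simplifies to $G_0(l)$ in (\ref{equ:equal_aggregation_level}) and hence to the powers in (\ref{equ:power_control_subregion}). Substituting these back, the objective collapses to a function of the single scalar $u\triangleq1/\sqrt{\eta}$ of the form $au^2-bu+c$ with $a>0$ (the noise term contributes $D\sigma_n^2u^2$ and each $G_i(P_i,\eta)=C_iu$), so it is strictly convex in $u$ and its unique stationary point rearranges into (\ref{equ:eta_function_subregion}); here Lemma~\ref{lemma:eta_lower_bound} is used to certify that $\tilde{\eta}^*_l$ is positive and compatible with $\eta^*\ge C_1^2$, so that $p_1^*=P_1$, i.e. $l\ge1$, is consistent.

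Third, I would assemble the answer. The set $\mathcal{L}$ is nonempty because $l=K$ imposes no free powers and is automatically legal. By the reduction in the first paragraph, the minimum of $\mathcal{P}_1$ is attained at the legal index $l^*=\arg\min_{l\in\mathcal{L}}V_l$ of (\ref{equ:optimal_tau}), and reading off $(\tilde{\bm{p}}^*_{l^*},\tilde{\eta}^*_{l^*})$ from (\ref{equ:power_control_subregion}) and (\ref{equ:eta_function_subregion}) with $l=l^*$ produces precisely (\ref{equ:power_control}) and (\ref{equ:eta_function}).

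I expect the main obstacle to be the second stage: verifying that the stationary point of each relaxed subproblem is genuinely its global minimizer and stays feasible. One must in particular address the possibly degenerate regime $\beta+K-\sum_{i\le l}G_i(P_i,\eta)\le0$, in which the displayed formula for $\tilde{p}^*_{l,k}$ would have to be replaced by the corner solution, and confirm that the convex-quadratic-in-$u$ reduction is valid over the whole admissible range of $\eta$ rather than only locally. These are exactly the points that Lemmas~\ref{lemma:optimal_power_constrain} and \ref{lemma:illegal_solution} are designed to control, so once those are established the proof of Theorem~\ref{theorem:optimality} is essentially bookkeeping.
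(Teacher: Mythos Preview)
Your proposal is correct and follows essentially the same route as the paper: use Lemma~\ref{lemma:optimal_power_constrain} to decompose $\mathcal{P}_1$ into the subregions $\mathcal{M}_l$, solve each relaxed subproblem $\tilde{\mathcal{M}_l}$ via the two-stage stationarity argument yielding (\ref{equ:power_control_subregion}) and (\ref{equ:eta_function_subregion}), invoke Lemma~\ref{lemma:illegal_solution} to discard illegal indices, and finish by selecting $l^*=\arg\min_{l\in\mathcal{L}}V_l$. The paper's own proof is in fact terser than your outline --- it does not explicitly treat the convexity-in-$u$ reduction or the degenerate regime you flag --- so your plan is, if anything, more careful than what appears in Appendix~\ref{proof:theorem:optimality}.
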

\begin{proof}
Please refer to Appendix \ref{proof:theorem:optimality}.
\end{proof}
\begin{remark}
Theorem \ref{theorem:optimality} shows that these devices $k\in\{1,...,l^*\}$ with aggregation capability not higher than that of device $l^*$ should transmit their gradients with full power, i.e., $p_k=P_k$, while devices $k\in\{l^*+1,...,K\}$ with aggregation capability higher than that of device $l^*$ should transmit with the power so that they have the same aggregation level, given by
\begin{align}
G^*_0=\frac{\beta+K-\sum_{i=1}^{l^*}{G_i(P_{i},\eta^*)}}{\beta+K-l^*},
\label{equ:equal_aggregation}
\end{align}
somewhat analogous to channel inversion.
\end{remark}

\subsection{On The Optimal Transmit Power Function}
In this subsection, we analyse the optimal transmit power $\bm{p}^*$ as a vector function of the gradient MSN $\alpha$, the gradient SMCV $\beta$ and the noise variance $\sigma^2_n$, i.e., $\bm{p}^*(\alpha,\beta,\sigma^2_n)=[p^*_1(\alpha,\beta,\sigma^2_n),...,p^*_K(\alpha,\beta,\sigma^2_n)]$.
Note that the vector function $\bm{p}^*(\alpha,\beta,\sigma^2_n)$ might have abrupt changes at some $(\alpha,\beta,\sigma^2_n)$ due to the discrete nature of the optimal device threshold $l^*$ in Theorem \ref{theorem:optimality}.
In this work, however, we can show that $\bm{p}^*(\alpha,\beta,\sigma^2_n)$ is continuous\footnote{$\bm{p}^*(\alpha,\beta,\sigma^2_n)$ is a continuous vector-valued function if and only if each element $p^*_k(\alpha,\beta,\sigma^2_n)$, for $k\in\{1,...,K\}$ is a continuous function.} everywhere for all $(\alpha\geq 0,\beta\geq 0,\sigma^2_n\geq 0)$.
Denote $\mathcal{X}_l\subseteq\mathbb{R}^3$ as the domain of the vector function $\bm{p}^*(\alpha,\beta,\sigma^2_n)$ when $\bm{p}^*(\alpha,\beta,\sigma^2_n)$ lies in the subregion $\mathcal{M}_l$, for $l\in\mathcal{K}$.
That is, $\forall (\alpha,\beta,\sigma^2_n)\in\mathcal{X}_l$, one can have $l^*(\alpha,\beta,\sigma^2_n)=l$.

We first show that $\bm{p}^*(\alpha,\beta,\sigma^2_n)$ is continuous and monotonic within each domain $\mathcal{X}_l$, $\forall l\in\mathcal{K}$.
Let $\bm{\tilde{p}}^*_l(\alpha,\beta,\sigma^2_n)$ denote the optimal power of the sub-problem defined in the $l$-th relaxed subregion $\tilde{\mathcal{M}_l}$, $\forall l\in \mathcal{K}$.
Based on (\ref{equ:power_control_subregion}) and (\ref{equ:eta_function_subregion}), it is obvious that $\bm{\tilde{p}}^*_l(\alpha,\beta,\sigma^2_n)$ is continuous within each domain $\mathcal{X}_l$.
Moreover, $\bm{\tilde{p}}^*_l(\alpha,\beta,\sigma^2_n)$ increases monotonically with the noise variance $\sigma_n^2$ since enlarging $\sigma_n^2$ increases both $\tilde{\eta}^*_l$ and $\tilde{p}_{l,k}^*$, for $k\in\{l+1,...,K\}$.
Furthermore, $\bm{\tilde{p}}^*_l(\alpha,\beta,\sigma^2_n)$ is constant w.r.t. the gradient MSN $\alpha$ since it can be easily shown that at the optimal solution $\alpha\tilde{\eta}^*_l(\alpha,\beta,\sigma^2_n)$ is constant w.r.t. $\alpha$ by substituting $C_k=\sqrt{\frac{P_k}{\alpha}}|h_k|$ to (\ref{equ:eta_function_subregion}).
In addition, $\bm{\tilde{p}}^*_l(\alpha,\beta,\sigma^2_n)$ decreases monotonically with the gradient SMCV $\beta$ since it can be easily shown that $\frac{\partial\bm{\tilde{p}}^*_l(\alpha,\beta,\sigma^2_n)}{\partial\beta}$ is always negative.
Thus $\bm{\tilde{p}}^*_l(\alpha,\beta,\sigma^2_n)$ is monotonic within each domain $\mathcal{X}_l$, $l\in\mathcal{K}$.
By definition, within each domain $\mathcal{X}_l$, $\bm{p}^*(\alpha,\beta,\sigma^2_n)$ is equal to $\bm{\tilde{p}}^*_l(\alpha,\beta,\sigma^2_n)$, thus, the vector function $\bm{p}^*(\alpha,\beta,\sigma^2_n)$ is also continuous and monotonic within each domain $\mathcal{X}_l$, for $l\in \mathcal{K}$.

Then we find the boundary of each domain $\mathcal{X}_l$, for $l\in\mathcal{K}$ and the corresponding optimal transmit power $\bm{p}^*$.
To this end, we need the following lemma on the lower and upper bounds of the optimal transmit power values.
\begin{lemma}
\label{lemma:condition_2_optimal}
The optimal transmit power $\bm{p}^*$ lies in subregion $\mathcal{M}_l$ if and only if the optimal transmit power $\bm{\tilde{p}}^*_l$ in the $l$-th relaxed subregion $\mathcal{\tilde M}_l$ satisfies: $(\frac{C_{l}\sqrt{\alpha}}{|h_k|})^2\leq\tilde{p}^*_{l,k}<(\frac{C_{l+1}\sqrt{\alpha}}{|h_k|})^2,\forall k\in\{l+1,...,K\}$.
\end{lemma}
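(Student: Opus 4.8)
My plan is to handle both directions by pairing the structural facts of Lemmas~\ref{lemma:eta_lower_bound}--\ref{lemma:illegal_solution} with the first-order (KKT) optimality conditions of $\mathcal{P}_1$, after first using the equal-aggregation-level property \eqref{equ:equal_aggregation_level} to collapse the claimed two-sided bound into a single scalar condition. By \eqref{equ:power_control_subregion}--\eqref{equ:equal_aggregation_level}, the quantity $\tilde p^*_{l,k}|h_k|^2 = G_0(l)^2\,\tilde\eta^*_l\,\alpha$ is the same for every $k\in\{l+1,\dots,K\}$; write it as $\alpha R_l$. Since $C_k^2\alpha/|h_k|^2 = P_k$ and $C_1\le\cdots\le C_K$ by \eqref{equ:sort}, the inequalities $(C_{l}\sqrt{\alpha}/|h_k|)^2\le\tilde{p}^*_{l,k}<(C_{l+1}\sqrt{\alpha}/|h_k|)^2$ for all $k>l$ are equivalent to the scalar sandwich $C_l^2 \le R_l < C_{l+1}^2$, i.e. $C_l \le G_0(l)\sqrt{\tilde\eta^*_l} < C_{l+1}$. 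I will prove the lemma in this scalar form.

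For the necessity ($\Rightarrow$) direction, assume $\bm p^*\in\mathcal M_l$. Lemma~\ref{lemma:optimal_power_constrain} gives the threshold structure, and the contrapositive of Lemma~\ref{lemma:illegal_solution} forces $\tilde p^*_{l,k}<P_k$ for all $k>l$; hence the relaxed-subproblem optimum $\tilde{\bm p}^*_l$ lies inside $\mathcal M_l$ and is therefore also the minimizer over $\mathcal M_l$, so $\bm p^*=\tilde{\bm p}^*_l$. Feasibility of device $l+1$, i.e. $\alpha R_l=\tilde p^*_{l,l+1}|h_{l+1}|^2<P_{l+1}|h_{l+1}|^2=C_{l+1}^2\alpha$, yields the upper bound $R_l<C_{l+1}^2$. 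For the lower bound I use that $p_l=P_l$ must be consistent with optimality: writing $f$ for the objective in \eqref{equ:problem_formulation},
\[
\frac{\partial f}{\partial p_k}=\frac{\alpha\,G_k(p_k,\eta)}{(\beta+1)\,p_k}\Big[\beta\big(G_k(p_k,\eta)-1\big)+\big(\textstyle\sum_{i\in\mathcal K}G_i(p_i,\eta)-K\big)\Big],
\]
and substituting the stationarity equations of the devices $k>l$ (which give $\sum_{i}G_i-K=\beta(1-G_0(l))$) reduces the KKT condition $\partial f/\partial p_l\le 0$ at $p_l=P_l$ to $C_l/\sqrt{\tilde\eta^*_l}\le G_0(l)$, i.e. $R_l\ge C_l^2$.

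For the sufficiency ($\Leftarrow$) direction, assume $C_l^2\le R_l<C_{l+1}^2$. The right inequality gives $\tilde p^*_{l,k}<C_{l+1}^2\alpha/|h_k|^2\le C_k^2\alpha/|h_k|^2=P_k$ for every $k>l$ (using $C_{l+1}\le C_k$), so $\tilde{\bm p}^*_l\in\mathcal M_l$ and $l$ is a legal candidate. The left inequality, together with $C_k\le C_l$ for $k\le l$, makes the displayed derivative nonpositive at $p_k=P_k$ for every $k\le l$; combined with the stationarity in $p_k$ ($k>l$) and in $\eta$ already built into \eqref{equ:power_control_subregion} and \eqref{equ:eta_function_subregion}, this shows $(\tilde{\bm p}^*_l,\tilde\eta^*_l)$ is a KKT point of $\mathcal P_1$ of threshold form. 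It remains to conclude $\bm p^*=\tilde{\bm p}^*_l$.

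The main obstacle is exactly this last step: $\mathcal P_1$ is non-convex, so being a KKT point does not per se certify global optimality. I plan to close it by showing the scalar sandwich is met by \emph{at most one} index $l$ --- equivalently, that $l\mapsto G_0(l)\sqrt{\tilde\eta^*_l}$ crosses the nondecreasing sequence $C_1\le\cdots\le C_K$ at a single location --- which can be verified from the closed forms \eqref{equ:power_control_subregion} and \eqref{equ:eta_function_subregion} by signing the relevant difference, or alternatively by showing the candidate values $(V_l)_{l\in\mathcal L}$ in \eqref{equ:optimal_tau} are unimodal in $l$ with minimizer characterized precisely by the sandwich. Since Lemmas~\ref{lemma:optimal_power_constrain}--\ref{lemma:illegal_solution} guarantee $\bm p^*\in\mathcal M_\ell$ for a unique $\ell$, and the necessity direction shows that $\ell$ satisfies the sandwich, uniqueness of the satisfying index forces $l=\ell$, i.e. $\bm p^*\in\mathcal M_l$. (If the proof of Theorem~\ref{theorem:optimality} already establishes uniqueness of the threshold-form optimizer, this step collapses to a one-line citation.)
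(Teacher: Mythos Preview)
Your proposal is correct in outline but proceeds along a different axis than the paper's proof. You rely on KKT/first-order conditions to extract the lower bound $C_l\le G_0(l)\sqrt{\tilde\eta^*_l}$ in the $\Rightarrow$ direction and then reduce the $\Leftarrow$ direction to a uniqueness-of-sandwich-index statement. The paper instead uses elementary \emph{power-exchange} comparisons throughout: for the lower bound it assumes $C_{l^*}>\frac{\sqrt{p^*_{l^*+1}}|h_{l^*+1}|}{\sqrt{\alpha}}$ and constructs a strictly better feasible point by equalizing the aggregation levels of devices $l^*$ and $l^*+1$, contradicting optimality; no Lagrangian is written down. For the converse the paper argues the contrapositive directly, splitting into $l<l^*$ (where $\mathcal{M}_{l^*}\subseteq\tilde{\mathcal M}_l$ immediately forces $\tilde p^*_{l,k}\ge P_{l+1}$, else $\tilde{\bm p}^*_l$ would beat $\bm p^*$) and $l>l^*$ (where a contradiction plus induction shows the left inequality fails). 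This is exactly the uniqueness you identify as the ``main obstacle,'' but the paper establishes it by these concrete comparison/containment arguments rather than by signing closed-form differences or proving unimodality of $V_l$. Your KKT route is cleaner and more systematic, and your scalar reformulation $C_l^2\le R_l<C_{l+1}^2$ is a nice simplification the paper does not make explicit; the paper's route is more self-contained and avoids invoking constraint-qualification/necessary-condition machinery for a non-convex program. Either approach works; just be aware that the uniqueness step you defer is nontrivial and in the paper takes roughly as much work as everything else combined.
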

\begin{proof}
Please refer to Appendix \ref{proof:lemma:condition_2_optimal}.
\end{proof}

Lemma \ref{lemma:condition_2_optimal} shows that in each domain $\mathcal{X}_l,l\in\mathcal{K}$, the range of $\bm{p}^*(\alpha,\beta,\sigma^2_n)$ is left-closed and right-open intervals in $\left[(\frac{C_{l}\sqrt{\alpha}}{|h_k|})^2,(\frac{C_{l+1}\sqrt{\alpha}}{|h_k|})^2\right),\forall k\in\{l+1,...,K\}$.
Recall that $\bm{\tilde{p}}^*_l(\alpha,\beta,\sigma^2_n)$ is continuous and monotonic.
The optimal transmit power $\bm{p}^*(\alpha,\beta,\sigma^2_n)=\bm{p}^*_l(\alpha,\beta,\sigma^2_n)$ sits on the lower bound of the range when $(\alpha,\beta,\sigma^2_n)$ is at the lower boundary of domain $\mathcal{X}_l$, denoted as $\mathcal{L}_l\triangleq \left\{(\alpha,\beta,\sigma^2_n)|\tilde{p}^*_{l,k}(\alpha,\beta,\sigma^2_n)=(\frac{C_{l}\sqrt{\alpha}}{|h_k|})^2,\forall k\in\{l+1,...,K\}\right\}$, and $\bm{p}^*(\alpha,\beta,\sigma^2_n)=\bm{p}^*_l(\alpha,\beta,\sigma^2_n)$ approaches the upper bound of the range when $(\alpha,\beta,\sigma^2_n)$ approaches the upper boundary of domain $\mathcal{X}_l$, denoted as $\mathcal{U}_l\triangleq \left\{(\alpha,\beta,\sigma^2_n)|\tilde{p}^*_{l,k}(\alpha,\beta,\sigma^2_n)=(\frac{C_{l+1}\sqrt{\alpha}}{|h_k|})^2,\forall k\in\{l+1,...,K\}\right\}$.

Next, we consider the continuity of $\bm{p}^*(\alpha,\beta,\sigma^2_n)$ at boundaries $\mathcal{L}_l$ and $\mathcal{U}_l$ for each $l\in\mathcal{K}$ in the following lemma.
\begin{lemma}
\label{lemma:continuous_globally}
The optimal transmit power function $\bm{p}^*(\alpha,\beta,\sigma^2_n)$ is continuous at $\mathcal{U}_l=\mathcal{L}_{l+1}$, for $l\in\{1,...,K-1\}$.
\end{lemma}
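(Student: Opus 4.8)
The plan is to prove the lemma in two stages: first establish the set identity $\mathcal{U}_l=\mathcal{L}_{l+1}$ together with the fact that the two relaxed-region solutions coincide there, i.e.\ $\bm{\tilde{p}}^*_l=\bm{\tilde{p}}^*_{l+1}$ on this common boundary; then glue the two branches $\bm{p}^*|_{\mathcal{X}_l}=\bm{\tilde{p}}^*_l$ and $\bm{p}^*|_{\mathcal{X}_{l+1}}=\bm{\tilde{p}}^*_{l+1}$, each already known to be continuous, along it.

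For the inclusion $\mathcal{U}_l\subseteq\mathcal{L}_{l+1}$: on $\mathcal{U}_l$ device $l+1$ transmits at full power, because $C_{l+1}=\sqrt{P_{l+1}/\alpha}\,|h_{l+1}|$ makes the defining condition $\tilde{p}^*_{l,l+1}=(C_{l+1}\sqrt{\alpha}/|h_{l+1}|)^2$ identical to $\tilde{p}^*_{l,l+1}=P_{l+1}$. Hence at any $x_0=(\alpha,\beta,\sigma^2_n)\in\mathcal{U}_l$ the pair $(\bm{\tilde{p}}^*_l,\tilde{\eta}^*_l)$ has $p_k=P_k$ for $k\le l+1$, is stationary in $p_k$ for every $k\ge l+1$ (in particular for $k\ge l+2$), and is stationary in $\eta$; these are exactly the first-order conditions that characterize $(\bm{\tilde{p}}^*_{l+1},\tilde{\eta}^*_{l+1})$ through (\ref{equ:power_control_subregion})--(\ref{equ:eta_function_subregion}) with $l$ replaced by $l+1$. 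Since that characterization has a unique solution, $\bm{\tilde{p}}^*_l(x_0)=\bm{\tilde{p}}^*_{l+1}(x_0)$, and reading off $\tilde{p}^*_{l+1,k}(x_0)=(C_{l+1}\sqrt{\alpha}/|h_k|)^2$ for $k\ge l+2$ shows $x_0\in\mathcal{L}_{l+1}$.

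The reverse inclusion $\mathcal{L}_{l+1}\subseteq\mathcal{U}_l$ is the substantive step, and I expect it to be \emph{the main obstacle}. On $\mathcal{L}_{l+1}$, the defining condition $\tilde{p}^*_{l+1,k}=(C_{l+1}\sqrt{\alpha}/|h_k|)^2$, $k\ge l+2$, together with (\ref{equ:equal_aggregation_level}), forces the common aggregation level of the free devices $l+2,\dots,K$ to equal $C_{l+1}/\sqrt{\tilde{\eta}^*_{l+1}}$, which is precisely the aggregation level that device $l+1$ attains at its full power $P_{l+1}$. The algebraic fact I would exploit is that, after differentiating the objective in (\ref{equ:problem_formulation}) through $G_k(p_k,\eta)=\sqrt{p_k/(\eta\alpha)}\,|h_k|$ and cancelling the positive factor $\partial G_k/\partial p_k$, the stationarity condition $\partial/\partial p_k=0$ reduces to $\beta\bigl(G_k-1\bigr)+\bigl(\sum_{i\in\mathcal{K}}G_i-K\bigr)=0$, which constrains $p_k$ only through $G_k$ (the sum $\sum_{i}G_i$ being a single number fixed by the configuration). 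Consequently any device sharing the aggregation level of a stationary free device is itself stationary; since devices $l+1,\dots,K$ all have aggregation level $C_{l+1}/\sqrt{\tilde{\eta}^*_{l+1}}$ on $\mathcal{L}_{l+1}$ and devices $l+2,\dots,K$ are stationary, device $l+1$ is stationary as well. Hence $(\bm{\tilde{p}}^*_{l+1},\tilde{\eta}^*_{l+1})$ satisfies every first-order condition defining $(\bm{\tilde{p}}^*_l,\tilde{\eta}^*_l)$ --- powers $P_k$ for $k\le l$, stationarity in $p_k$ for $k\ge l+1$, and stationarity in $\eta$ --- so, by the same uniqueness, the two pairs coincide at $x_0$; then $\tilde{p}^*_{l,l+1}(x_0)=P_{l+1}=(C_{l+1}\sqrt{\alpha}/|h_{l+1}|)^2$ and $\tilde{p}^*_{l,k}(x_0)=(C_{l+1}\sqrt{\alpha}/|h_k|)^2$ for $k\ge l+2$, i.e.\ $x_0\in\mathcal{U}_l$. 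Combining the two inclusions gives $\mathcal{U}_l=\mathcal{L}_{l+1}$ and $\bm{\tilde{p}}^*_l=\bm{\tilde{p}}^*_{l+1}$ on it.

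For the gluing stage, recall from the discussion following Lemma \ref{lemma:condition_2_optimal} that $\bm{p}^*$ equals the continuous function $\bm{\tilde{p}}^*_l$ on $\mathcal{X}_l$ and the continuous function $\bm{\tilde{p}}^*_{l+1}$ on $\mathcal{X}_{l+1}$, these being the domains adjacent along $\mathcal{U}_l=\mathcal{L}_{l+1}$. Fix $x_0\in\mathcal{U}_l=\mathcal{L}_{l+1}$; since $\mathcal{X}_{l+1}$ contains its lower boundary, $\bm{p}^*(x_0)=\bm{\tilde{p}}^*_{l+1}(x_0)=\bm{\tilde{p}}^*_l(x_0)$ by the first stage. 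Given any sequence $x^{(n)}\to x_0$, split it into the sub-sequences lying in $\overline{\mathcal{X}_l}$ and in $\overline{\mathcal{X}_{l+1}}$ (on the lower-dimensional set where further domains meet, the required matching still propagates by transitivity of the pairwise identities just proved): along the former $\bm{p}^*(x^{(n)})=\bm{\tilde{p}}^*_l(x^{(n)})\to\bm{\tilde{p}}^*_l(x_0)$, along the latter $\bm{p}^*(x^{(n)})=\bm{\tilde{p}}^*_{l+1}(x^{(n)})\to\bm{\tilde{p}}^*_{l+1}(x_0)$, and both limits equal $\bm{p}^*(x_0)$. Hence $\bm{p}^*(x^{(n)})\to\bm{p}^*(x_0)$, which is the asserted continuity; everything apart from the stationarity argument in the third paragraph is bookkeeping with the closed-form expressions already at hand.
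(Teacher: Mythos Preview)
Your proof is correct and reaches the same conclusion as the paper, but by a genuinely different route. The paper establishes $\mathcal{U}_l=\mathcal{L}_{l+1}$ by a direct chain of algebraic equivalences on the closed-form denoising factors: starting from $\sqrt{\tilde{\eta}^*_{l+1}}=\bigl(\sum_{i\le l+1}C_i+(\beta+K-l-1)C_{l+1}\bigr)/(\beta+K)$, it manipulates the expression in (\ref{equ:eta_function_subregion}) step by step until it reads $\sqrt{\tilde{\eta}^*_l}=\bigl(\sum_{i\le l}C_i+(\beta+K-l)C_{l+1}\bigr)/(\beta+K)$, and then checks that this is exactly the $\mathcal{U}_l$ condition. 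You instead argue structurally: the first-order condition $\partial/\partial p_k=0$ reduces to $\beta(G_k-1)+\bigl(\sum_iG_i-K\bigr)=0$, which constrains $p_k$ only through its aggregation level $G_k$; since on $\mathcal{L}_{l+1}$ device $l{+}1$ at full power shares the aggregation level $C_{l+1}/\sqrt{\tilde{\eta}^*_{l+1}}$ of the already-stationary free devices, it is itself stationary, so $(\bm{\tilde{p}}^*_{l+1},\tilde{\eta}^*_{l+1})$ solves the KKT system defining $(\bm{\tilde{p}}^*_l,\tilde{\eta}^*_l)$, and uniqueness of that solution (which follows from the closed forms (\ref{equ:power_control_subregion})--(\ref{equ:eta_function_subregion})) forces the two to coincide. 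Your argument is shorter and more conceptual, and it makes transparent \emph{why} the boundaries match rather than merely verifying that they do; the paper's computation, on the other hand, is entirely self-contained and does not appeal to uniqueness of the stationary pair. Your gluing step via sequences is essentially the same as the paper's left/right-limit check, only phrased more carefully.
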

\begin{proof}
Please refer to Appendix \ref{proof:lemma:continuous_globally}.
\end{proof}

Finally, we can formally conclude the following property of the optimal transmit power function with respect to the gradient statistics and noise variance.
\begin{property}
The optimal transmit power $\bm{p}^*(\alpha,\beta,\sigma^2_n)$ of problem $\mathcal{P}_1$ is a continuous and monotonic vector function in the entire domain of $(\alpha\geq 0,\beta\geq 0,\sigma^2_n\geq 0)$.
Moreover, it decreases with $\beta$, increases with $\sigma_n^2$, and remain unchanged w.r.t. $\alpha$.
\end{property}

\begin{figure}[htb]
\centering

\subfigure[Average received SNR=0 dB.]
{\begin{minipage}[t]{0.48\linewidth}
\centering
\includegraphics[width=3.2in]{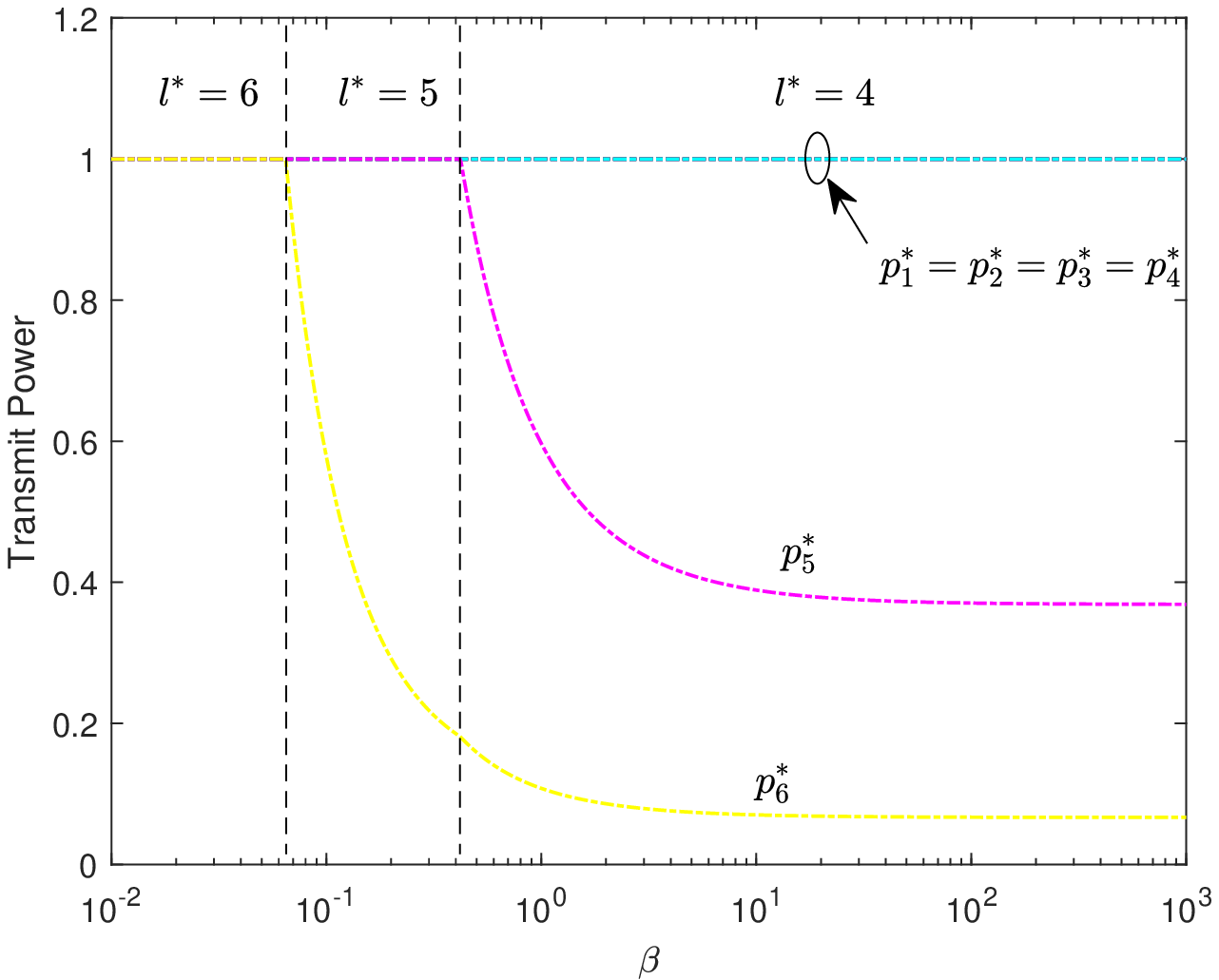}
\label{fig:p_beta_0db}
\end{minipage}}
\subfigure[Average received SNR=10 dB.]
{\begin{minipage}[t]{0.48\linewidth}
\centering
\includegraphics[width=3.2in]{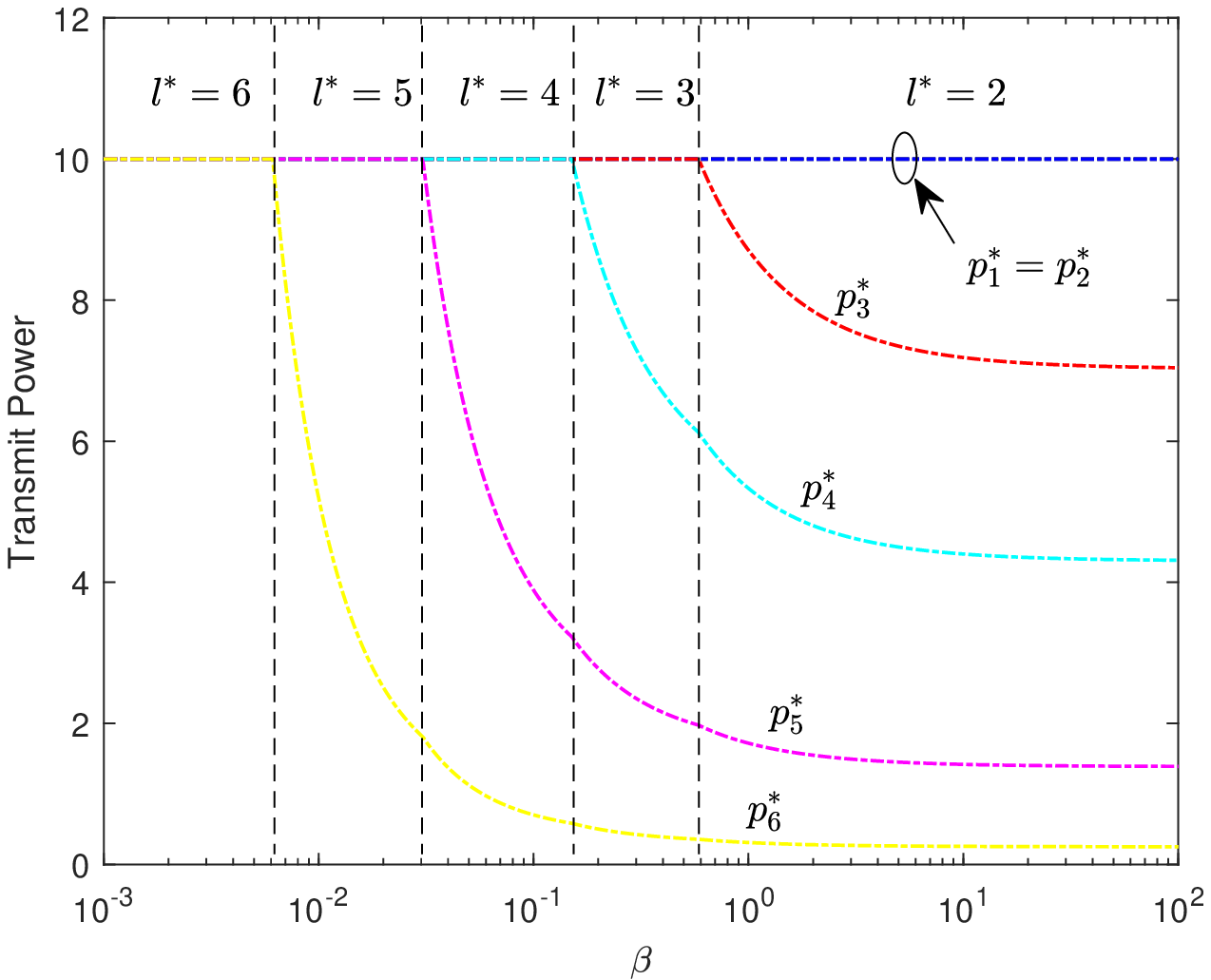}
\label{fig:p_beta_10db}
\end{minipage}}

\centering
\caption{Illustration of the optimal transmit power $\bm{p}^*$ as a function of gradient SMCV $\beta$.}
\label{fig:optimal power control} %% label for entire figure
\end{figure}
Take a system with $K = 6$ devices for illustration.
Fig.~\ref{fig:optimal power control} shows the optimal transmit power of each device $p^*_k$ as a function of gradient SMCV $\beta$, together with the corresponding optimal index $l^*$.
The results are based on one channel realization of each device, $|h_k|$, taken independently from normalized Rayleigh distribution, given by $[0.50,0.82,0.85,1.16,2.09,2.83]$.
The peak power constraint of each device is set to be same, resulting in the same average received SNR, given by $\frac{P_k}{D\sigma_n^2}=\mbox{5 dB and 10 dB}$, respectively.
The gradient MSN $\alpha=0.25$, noise variance $\sigma^2=1$ and $D=1$.
Fig.~\ref{fig:optimal power control} clearly verifies that the optimal transmit power $\bm{p}^*(\alpha,\beta,\sigma^2_n)$ is a continuous and monotonically decreasing vector function of the gradient SMCV $\beta$.
In particular, it is seen that when $\beta\rightarrow 0$, all the devices transmit with full power; when beta increases, the power of the devices with large aggregation capability decreases, then gradually approaches a constant, and the larger the aggregation capability is, the smaller the constant transmit power is.
In addition, it is observed from Fig.~\ref{fig:optimal power control} that when the peak power budget increases, more devices can transmit with less power than its peak value.
Equivalently, the optimal transmit power decreases when the noise variance decreases.

\subsection{Power Control Problem for Special Cases}
In this subsection, we show that the threshold-based power control in \cite{cao2019optimal} and the full power transmission are two special cases of the optimal power control policy where the gradient SMCV $\beta\rightarrow\infty$ and $\beta\rightarrow 0$, respectively.
\subsubsection{$\beta\rightarrow\infty$}
As discussed in Section II-D, this case may happen when the model training converges and/or the dataset is highly non-IID.
In this case, the problem $\mathcal{P}_1$ reduces to the power control problem in \cite{cao2019optimal}.
The optimal solution when $\beta\rightarrow\infty$ can be reduced directly from the solution of problem $\mathcal{P}_1$.
Specifically, by letting $\beta\rightarrow\infty$ in Theorem \ref{theorem:optimality}, we have the following Corollary,
\begin{corollary}
\label{corollary:optimal_infinity}
The optimal transmit power when $\beta\rightarrow\infty$ has a threshold-based structure, i.e.,
\begin{equation}
p_k^*=
\left\{
             \begin{array}{ll}
             P_k, &\forall{k}\in\{1,...,l^*\}\\
             \frac{\alpha\eta^*}{|h_k|^2}, &\forall{k}\in\{l^*+1,...,K\},
             \end{array}
\right.
\label{equ:infinity_power_control}
\end{equation}
where the optimal denoising factor is given as
\begin{align}
\eta^*=\bigg(\frac{\alpha\sum_{i=1}^{l^*}{C_i^2}+D\sigma_n^2}{\alpha\sum_{i=1}^{l^*}{C_i}}\bigg)^2,
\label{equ:infinity_denoising_factor}
\end{align}
and $l^*$ is given in (\ref{equ:optimal_tau}).
\end{corollary}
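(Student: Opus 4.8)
The plan is to obtain Corollary~\ref{corollary:optimal_infinity} as the $\beta\to\infty$ limit of the closed-form optimum of $\mathcal{P}_1$ furnished by Theorem~\ref{theorem:optimality}, rather than re-solving the reduced problem of \cite{cao2019optimal} from scratch. The point to exploit is that, once the subregion index $l^*$ is fixed, every quantity in (\ref{equ:power_control})--(\ref{equ:eta_function}) is a rational function of $\beta$, so the limits can be computed termwise by comparing leading-order terms; the only genuinely delicate issue is that $l^*$ itself depends on $\beta$ and must be shown to settle.

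First I would fix $l^*$ and pass to the limit in the denoising factor (\ref{equ:eta_function}). Writing $G_i(P_i,\eta^*)=C_i/\sqrt{\eta^*}$, a routine asymptotic analysis of (\ref{equ:eta_function}) using $\tfrac{\beta}{\beta+1}\to 1$, $\tfrac{\beta}{(\beta+K-l^*)(\beta+1)}\to 0$ and $\tfrac{\beta(\beta+K)}{(\beta+K-l^*)(\beta+1)}\to 1$ collapses it to $\sqrt{\eta^*}\to\frac{\alpha\sum_{i=1}^{l^*}C_i^2+D\sigma_n^2}{\alpha\sum_{i=1}^{l^*}C_i}$, which is exactly (\ref{equ:infinity_denoising_factor}). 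This limit is finite and strictly positive because $l^*\ge 1$ always holds ($p_1^*=P_1$ by Lemmas~\ref{lemma:eta_lower_bound} and \ref{lemma:optimal_power_constrain}); hence $\eta^*$ converges and $\sum_{i=1}^{l^*}G_i(P_i,\eta^*)$ converges to a finite constant.

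Next I would substitute these facts into the second branch of (\ref{equ:power_control}). Since $\sum_{i=1}^{l^*}G_i(P_i,\eta^*)$ tends to a constant, the bracket $\big[\tfrac{\beta+K-\sum_{i=1}^{l^*}G_i(P_i,\eta^*)}{\beta+K-l^*}\big]^2$ has numerator and denominator both equal to $\beta+O(1)$ and therefore tends to $1$; multiplying by the convergent factor $\tfrac{\alpha\eta^*}{|h_k|^2}$ gives $p_k^*\to\tfrac{\alpha\eta^*}{|h_k|^2}$ for $k\in\{l^*+1,\dots,K\}$, while the full-power branch $p_k^*=P_k$ for $k\le l^*$ is unaffected. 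Together with the previous paragraph this reproduces (\ref{equ:infinity_power_control}).

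The main obstacle is justifying that the threshold $l^*$ is well defined in the limit and coincides with the one returned by (\ref{equ:optimal_tau}) for the $\beta\to\infty$ problem, since a priori $l^*(\alpha,\beta,\sigma_n^2)$ could keep jumping. Here I would invoke the Property stated just above the corollary: $\bm{p}^*(\alpha,\beta,\sigma_n^2)$ is continuous and monotone in $\beta$ on all of $\beta\ge 0$ and is bounded ($p_k^*\in[0,P_k]$), hence it converges as $\beta\to\infty$; since there are finitely many subregions $\mathcal{M}_l$, the index $l^*$ is eventually constant, or the limit point lies on a shared boundary $\mathcal{U}_l=\mathcal{L}_{l+1}$ where the two candidate expressions agree by Lemma~\ref{lemma:continuous_globally}. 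To identify this stabilized index with (\ref{equ:optimal_tau}), I would note that each $V_l$ is obtained by plugging (\ref{equ:power_control_subregion})--(\ref{equ:eta_function_subregion}) into (\ref{equ:problem_formulation}), hence is a rational function of $\beta$ with a finite limit computed by the same termwise limits, and that the legality condition $\tilde p^*_{l,k}<P_k$ of Lemma~\ref{lemma:illegal_solution} passes to the limit; therefore, for all sufficiently large $\beta$, $\arg\min_{l\in\mathcal{L}}V_l$ agrees with the minimizer of the limiting values, which is exactly the $l^*$ in the statement. Combining the three limits completes the proof.
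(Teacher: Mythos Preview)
Your approach is correct and is essentially the same route the paper takes: it derives the corollary by letting $\beta\to\infty$ in Theorem~\ref{theorem:optimality}, with no separate re-derivation. If anything, your treatment is more careful than the paper's one-line justification, since you explicitly handle the stabilization of $l^*$ via the continuity/monotonicity Property and Lemma~\ref{lemma:continuous_globally}, whereas the paper tacitly assumes this point.
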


Note that the optimal denoising factor $\eta^*$ when $\beta\rightarrow\infty$ can be interpreted as the threshold, since whether a device transmits with full power or not depends entirely on the comparison between its aggregation capability $C_k$ and $\sqrt{\eta^*}$.
Nevertheless, such threshold interpretation of $\eta^*$ does not hold for problem $\mathcal{P}_1$ with general $\beta$.

\subsubsection{$\beta\rightarrow 0$}
As discussed in Section II-D, $\beta\rightarrow 0$ could happen when the model training just begins and the dataset is IID.
In this case, the individual signal misalignment error disappears in the MSE expressions.
The original MSE expression reduces to $\alpha{\left(\sum_{k\in\mathcal{K}}G_k(p_k,\eta)-K\right)^2}+\frac{D\sigma_n^2}{\eta}$.
Note that the optimal solution when $\beta\rightarrow 0$ cannot be reduced directly from Theorem \ref{theorem:optimality} by simply letting $\beta\rightarrow 0$ as $l^*$ is unknown.

\begin{corollary}
\label{corollary:beta_zero}
The optimal transmit power when $\beta\rightarrow 0$ is full power transmission, i.e.,
\begin{align}
p_k^*=P_k,~~~\forall{k}\in\mathcal{K},
\label{equ:zero_power_control}
\end{align}
and the optimal denoising factor is given by
\begin{align}
\eta^*=\bigg(\frac{\alpha\big(\sum_{i\in\mathcal{K}}{C_i}\big)^2+D\sigma_n^2}{\alpha K\sum_{i\in\mathcal{K}}{C_i}}\bigg)^2.
\label{equ:zero_denoising_factor}
\end{align}
\end{corollary}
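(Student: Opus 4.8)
The plan is to address the $\beta\to 0$ case directly, rather than by passing to a limit in Theorem~\ref{theorem:optimality}; as the text already observes, one cannot simply set $\beta\to 0$ there because the optimal threshold $l^*$ is not determined. Setting $\beta\to 0$ in (\ref{equ:MSE_reformulation}) (using $\beta\alpha/(\beta+1)\to 0$ and $\alpha/(\beta+1)\to\alpha$) leaves the objective
\[
f(\bm{p},\eta)=\alpha\Big(\sum_{k\in\mathcal{K}}G_k(p_k,\eta)-K\Big)^{2}+\frac{D\sigma_n^{2}}{\eta}
\]
to be minimized over $0\le p_k\le P_k$ and $\eta\ge 0$. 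The key move is a change of variables that decouples the device powers from the denoising factor: introduce the aggregate amplitude $t\triangleq\sum_{k\in\mathcal{K}}\sqrt{p_k}\,|h_k|$ and $u\triangleq 1/\sqrt{\eta}$, so that $\sum_{k\in\mathcal{K}}G_k(p_k,\eta)=tu/\sqrt{\alpha}$ and $f=(tu-K\sqrt{\alpha})^{2}+D\sigma_n^{2}u^{2}$.

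First I would fix $t$ and minimize over $u\ge 0$: this is a convex quadratic in $u$ whose stationary point $u^*(t)=tK\sqrt{\alpha}/(t^{2}+D\sigma_n^{2})$ is nonnegative, so the constraint $\eta\ge 0$ is inactive, and back-substitution gives the residual value
\[
\phi(t)=\frac{D\sigma_n^{2}K^{2}\alpha}{t^{2}+D\sigma_n^{2}},
\]
which is strictly decreasing in $t\ge 0$. Hence the outer problem reduces to \emph{maximizing} $t=\sum_{k\in\mathcal{K}}\sqrt{p_k}\,|h_k|$ subject to $0\le p_k\le P_k$. Since each summand is increasing in $p_k$ and bounded by $\sqrt{P_k}\,|h_k|=C_k\sqrt{\alpha}$, the maximum is attained precisely at $p_k^*=P_k$ for every device (for a device with $|h_k|=0$ the value of $p_k$ is immaterial and may be taken as $P_k$), which gives $t^*=\sqrt{\alpha}\sum_{i\in\mathcal{K}}C_i$ and establishes (\ref{equ:zero_power_control}). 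Substituting $t^*$ into $u^*(t)$ and inverting then yields $\sqrt{\eta^*}=1/u^*(t^*)=\big(\alpha(\sum_{i\in\mathcal{K}}C_i)^{2}+D\sigma_n^{2}\big)/\big(\alpha K\sum_{i\in\mathcal{K}}C_i\big)$, which is exactly (\ref{equ:zero_denoising_factor}).

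I do not anticipate a real obstacle; the effort is mainly in presenting these elementary steps cleanly. The points deserving a careful sentence are: (i) that the $\beta\to 0$ limit of the MSE is exactly $f(\bm{p},\eta)$; (ii) that $u^*(t)>0$ whenever $t>0$, so the inner unconstrained minimizer is feasible, while the degenerate case $t=0$ is dominated because $\phi$ is decreasing; and (iii) the term-by-term argument showing that maximality of $\sum_{k\in\mathcal{K}}\sqrt{p_k}\,|h_k|$ forces $p_k^*=P_k$ for all $k$. The one genuinely useful idea is the $(t,u)$ reparametrization, which turns a jointly coupled optimization over $(\bm{p},\eta)$ into a one-dimensional monotone problem whose solution can be read off immediately.
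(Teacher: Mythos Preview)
Your argument is correct, and the $(t,u)$ reparametrization followed by minimizing over $u$ first is a genuinely different (and arguably cleaner) route than the paper's. The paper proceeds in the opposite order: it first asserts a lower bound on the optimal denoising factor, namely $\eta^{*}\ge \frac{1}{K^{2}}\big(\sum_{k\in\mathcal{K}}C_k\big)^{2}$ (the printed bound has a typo), then observes that for any $\eta$ in this range one necessarily has $\sum_{k\in\mathcal{K}}G_k(p_k,\eta)\le K$, so the composite misalignment error is minimized by pushing every $p_k$ to $P_k$; only afterward is $\eta$ optimized as a quadratic in $1/\sqrt{\eta}$. Your approach sidesteps the need to establish that preliminary lower bound on $\eta^{*}$: by eliminating $u$ analytically you obtain a scalar objective $\phi(t)=D\sigma_n^{2}K^{2}\alpha/(t^{2}+D\sigma_n^{2})$ whose strict monotonicity makes the full-power conclusion immediate. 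What the paper's route buys is that it mirrors the structure of Lemma~\ref{lemma:eta_lower_bound} used in the general case; what yours buys is a self-contained, one-line reduction with no auxiliary bound to justify.
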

\begin{proof}
It is obvious that the optimal denoising factor satisfies $\eta^*\geq\sum_{k\in\mathcal{K}}{C_k}$.
For any denoising factor $\eta\geq\frac{1}{K^2}\left(\sum_{k\in\mathcal{K}}{C_k}\right)^2$, it must hold that $\sum_{k\in\mathcal{K}}G_k(p_k,\eta)\leq K$ when $\beta\rightarrow 0$.
Therefore, for minimizing the composite signal misalignment error $\left(\sum_{k\in\mathcal{K}}G_k(p_k,\eta)-K\right)^2$, all the devices should transmit with full power, i.e., $p_k^*=P_k,\forall k\in\mathcal{K}$.
The MSE expression $\alpha{\left(\sum_{k\in\mathcal{K}}G_k(P_k,\eta)-K\right)^2}+\frac{D\sigma_n^2}{\eta}$ is a unary quadratic function about $\frac{1}{\sqrt{\eta}}$, thus it is easy to derive the optimal solution (\ref{equ:zero_denoising_factor}) in closed form.
Corollary \ref{corollary:beta_zero} is thus proven.
\end{proof}

Note that $l^*=K$ when $\beta\rightarrow 0$ based on the proof.
The direction of the gradient vector received from each device at the edge server is independent to the power of the transmitting device.
Thus, increasing the power of all devices can reduce the noise-induced error when the composite signal misalignment error is fixed.
\begin{remark}
The threshold-based power control in \cite{cao2019optimal} and the full power transmission are sub-optimal in general and they are optimal only when $\beta\rightarrow\infty$ and $\beta\rightarrow 0$, respectively.
In particular, the threshold-based power control can only perform efficiently when the model training converges and the dataset is highly non-IID, the full power transmission can only perform efficiently when the model training just begins and the dataset is IID.
Compared with these two schemes, our proposed power control policy is always optimal.
These findings shall be validated via experimental results in Section V.
\end{remark}

\section{Adaptive Power Control with Unknown Gradient Statistics}
In this section, we consider the practical scenario where the gradient statistics $\alpha(t)$ and $\beta(t)$ are unknown.
We propose a method to estimate $\alpha(t)$ and $\beta(t)$ in each time block and then devise an adaptive power control scheme based on the optimal solution of problem $\mathcal{P}_1$ by using the estimated $\alpha(t)$ and $\beta(t)$.

\subsection{Parameters Estimation}
In this subsection, we propose a method to estimate $\alpha(t)$ and $\beta(t)$ at each time block $t$ directly based on their definitions in (\ref{equ:alpha}) and (\ref{equ:beta}), respectively.

\subsubsection{Estimation of $\alpha(t)$}
Let $B_{k}(t)\triangleq\|\bm{g}_k(t)\|$ denote the gradient norm of device $k$ at iteration $t$.
At the end of each round of local training, we let each device transmit its gradient norm $B_k(t)$ to the edge server before sending the gradient vector $\bm{g}_k(t)$.
Note that the communication cost of transmitting $B_k(t)$ is negligible compared with that of transmitting the gradient $\bm{g}_k(t)$.
This is because the gradient norm $B_k(t)$ is a scalar while the gradient $\bm{g}_k(t)$ is a vector whose dimension $D$ can be very large\footnote{In the considered datasets for experimental validation, $D$ is in the order of $10^4$.}.
By definition (\ref{equ:alpha}), we can estimate the gradient MSN as the mean of the squared gradient norms across all the participating devices:
\begin{equation}
\hat\alpha(t)=\frac{1}{K}\sum_{k\in\mathcal{K}}{B_k^2(t)}.
\label{equ:estimate_alpha}
\end{equation}

\subsubsection{Estimation of $\beta(t)$}
By definition in (\ref{equ:beta}), the gradient SMCV $\beta(t)$ depends on $m_d(t)$ and $\sigma_d(t)$.
Before each device sending its gradient at time block $t$, we cannot estimate $\beta(t)$ in advance.
However, from the experimental results of real datasets in Fig.~\ref{fig:alpha_beta}, it can be observed that $\beta(t)$ is highly correlated among adjacent iterations.
Thus we propose to estimate $\beta(t)$ using the aggregated gradient at time block $t-1$ as below
\begin{equation}
\hat\beta(t)=\frac{\hat\alpha(t-1)-\sum_{d=1}^D{\hat{g}_d^2(t-1)}}{\sum_{d=1}^D{\hat{g}_d^2(t-1)}},
\label{equ:estimate_beta}
\end{equation}
where $\hat\alpha(t-1)$ estimates $\sum_{d}{\sigma_d^2(t-1)+m_d^2(t-1)}$ and $\sum_d{\hat{g}_d^2(t-1)}$ estimates $\sum_{d}{m_d^2(t-1)}$.

\subsection{FL with Adaptive Power Control}
\begin{algorithm}[t]
\caption{FL Process with Adaptive Power Control}
\label{al:scheme}
\small
\begin{algorithmic}[1]
\State Initialize $\bm{w}(0)$ in edge server, $\hat\beta(1)$;
\For{time block $t=1,...,T$}
\State Edge server broadcasts the global model $\bm{w}(t)$ to all edge devices $k\in\mathcal{K}$;
\For{each device $k\in\mathcal{K}$ in parallel}
\State $\bm{g}_k(t)=\nabla L_{k,t}^{SGD}(\bm{w}(t))$;
\State $B_k(t)=\sqrt{\sum_d{g_{k,d}^2(t)}}$;
\State Upload $B_k(t)$ to edge server;
\EndFor
\State Edge server estimates $\hat\alpha(t)$ based on (\ref{equ:estimate_alpha});
\State Edge server obtains the optimal transmit power $\bm{p}^*(t)$ based on (\ref{equ:power_control}) and the optimal denoising factor $\eta^*(t)$ based on (\ref{equ:eta_function});
\State Edge server sends $p_k^*(t)$ to device $k$ for all $k\in\mathcal{K}$;
\For{each device $k\in\mathcal{K}$ in parallel}
\State Transmit gradient $\bm{g}_k(t)$ with power $p_k^*(t)$ to edge server using AirComp;
\EndFor
\State Edge server receives $\bm{y}(t)$ and recovers $\bm{\hat{g}}(t)$ based on (\ref{recover_gradient});
\State Edge server estimates $\hat\beta(t+1)$ based on (\ref{equ:estimate_beta});
\State Edge server updates global model $\bm{w}(t+1)=\bm{w}(t)-\gamma\bm{\hat{g}}(t)$;
\EndFor
\State Edge server returns $\bm{w}(T+1)$;
\end{algorithmic}
\end{algorithm}

In this subsection, we propose the FL process with adaptive power control, which is presented in Algorithm \ref{al:scheme}.
The algorithm has three steps.
First, each device locally takes one step of SGD on the current model using its local dataset (line 5).
After that each device calculates the norm of its local gradient and uploads it to the edge server with conventional digital transmission (line 6 and line 7).
Second, the edge server estimates parameters $\alpha(t)$ and $\beta(t)$ based on the received gradient norm at time block $t$ and historical aggregated gradient (line 9 and line 16).
Then the optimal transmit power and denoising factor are obtained based on (\ref{equ:power_control}) and (\ref{equ:eta_function}), respectively (line 10).
Third, the edge server informs the optimal transmit power to each device and each device transmits local gradient with the assigned power simultaneously using AirComp to the edge server in an analog manner (line 12-14).

Note that the computational complexity of finding the optimal power control in Algorithm \ref{al:scheme} mainly consists of the time complexity of sorting devices by aggregation capability, which is $O(K\log{K})$, and the time complexity of finding $l^*$ by Lemma \ref{lemma:condition_2_optimal}, which is $O(K)$.
It is approximately the same as the existing scheme \cite{cao2019optimal}.
The additional overhead introduced by our method when compared with \cite{cao2019optimal} lies at the communication cost for each device to send the gradient norm $B_k(t)$ to the edge server as shown in line 7 of Algorithm \ref{al:scheme}, which is negligible as discussed in Section IV-A.

\section{Experimental Results}
In this section, we provide experimental results to validate the performance of the proposed power control for AirComp-based FL over fading channels.
\subsection{Experiment Setup}
We conducted experiments on a simulated environment where the number of edge devices is $K=10$ if not specified otherwise.
The wireless channels from each device to the edge server follow IID Rayleigh fading, such that $h_k$'s are modeled as IID complex Gaussian variables with zero mean and unit variance.
For each device $k\in\mathcal{K}$, we define $\mbox{SNR}_k=\mathbb{E}\left[\frac{P_k|h_k|^2}{D\sigma_n^2}\right]=\frac{P_k}{D\sigma_n^2}$ as the average received SNR.

\subsubsection{Baselines}
We compare the proposed adaptive power control scheme with the following baseline approaches:
\begin{itemize}
\item\emph{Error-free transmission:} the aggregated gradient is updated perfectly without any transmission error.
This is equivalent to the centralized SGD algorithm.
\item\emph{Power control with known statistics:} We assume that the gradient statistics are known in advance at the beginning of the training, and then apply the proposed power control. In our experiments, the actual gradient statistics are obtained from $1000$ gradient samples without transmission error.
\item\emph{Threshold-based power control in \cite{cao2019optimal}:} this is the power control scheme given in \cite{cao2019optimal}, which assumed that signals are normalized.
    Note that it is actually the special case of our proposed power control scheme with $\beta\rightarrow\infty$ by considering the individual misalignment error only in problem $\mathcal{P}_1$.
\item\emph{Full power transmission:} all devices transmit with full power $P_k$ and the edge server applies the optimal denoising factor in (\ref{equ:eta_function}), where $l^*=K$.
\end{itemize}
\subsubsection{Datasets}
We evaluate the training of convolutional neural network on three datasets: MNIST, CIFAR-10 and SVHN.
MNIST dataset consists of 10 categories ranging from digit ¡°0¡± to ¡°9¡± and a total of 70000 labeled data samples (60000 for training and 10000 for testing).
CIFAR-10 dataset includes 60000 color images (50000 for training and 10000 for testing) of 10 different types of objects.
SVHN is a real-world image dataset for developing machine learning and object recognition algorithms with minimal requirement on data preprocessing and formatting, which includes 99289 labeled data samples (73257 for training and 26032 for testing).
\subsubsection{Data Distribution}
To study the impact of the SMCV of gradient $\beta$ for optimal transmit power, we simulate two types of dataset partitions among the mobile devices, i.e., the IID setting and non-IID one.
For the former, we randomly partition the training samples into $100$ equal shards, each of which is assigned to one particular device.
While for the latter, we first sort the data by digit label, divide it into $200$ equal shards, and randomly assign $2$ shards to each device.
\subsubsection{Neural network and hyper-parameters}
In all our experiments, we adopt the convolutional neural network (the number of layers is 11).
The hyper-parameters are set as follows: momentum optimizer is 0.5, the number of local updates between two global aggregations is 1, local batch size is 10 and learning rate $\gamma=0.01$.

\subsection{Experimental Results}
\begin{figure}[t]
\centering

\subfigure[MNIST dataset with IID partition.]
{\begin{minipage}[t]{0.45\linewidth}
\centering
\includegraphics[width=2.6in]{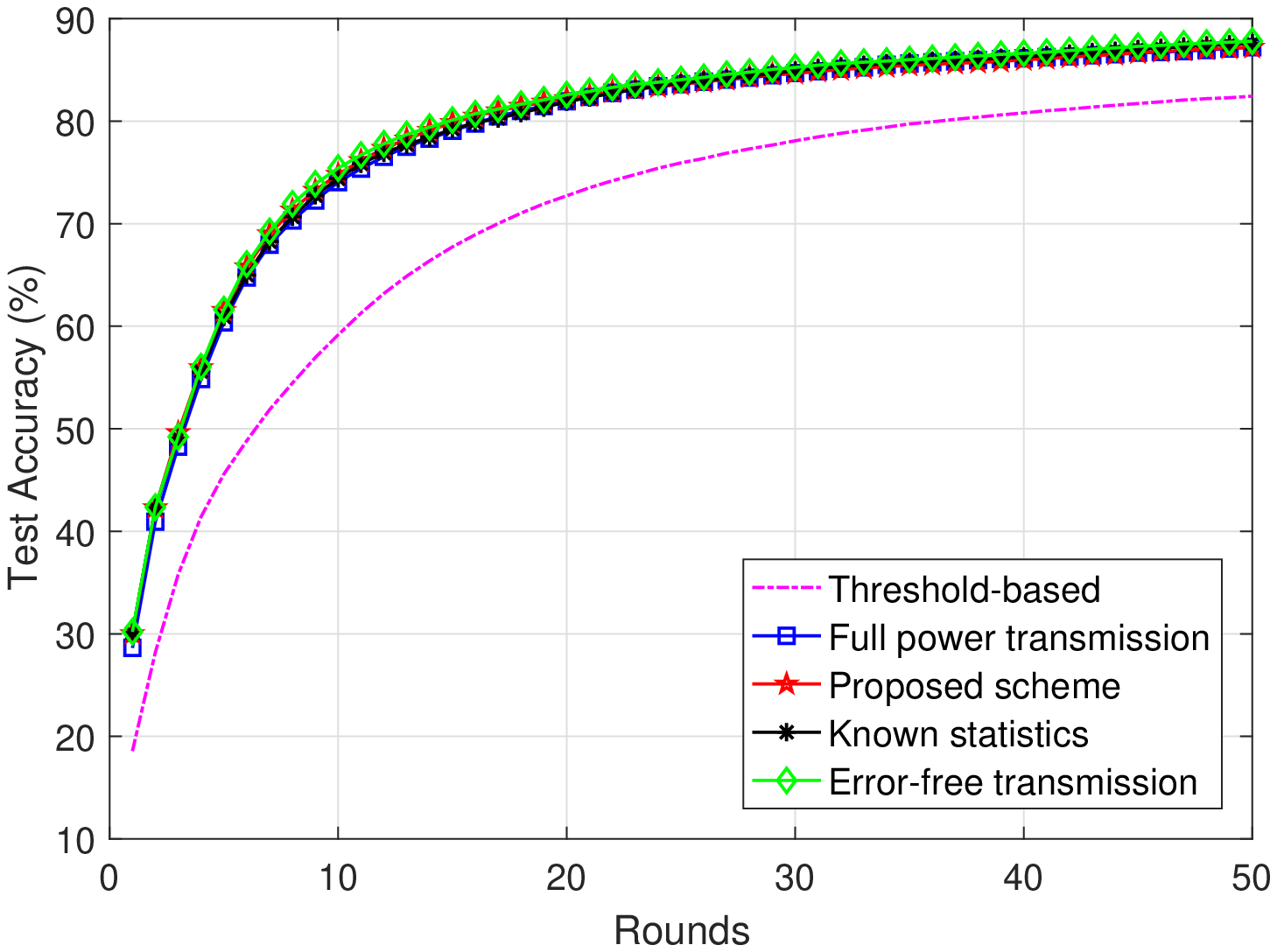}
\label{fig:mnist_iid}
\end{minipage}}
\subfigure[MNIST dataset with non-IID partition.]
{\begin{minipage}[t]{0.45\linewidth}
\centering
\includegraphics[width=2.6in]{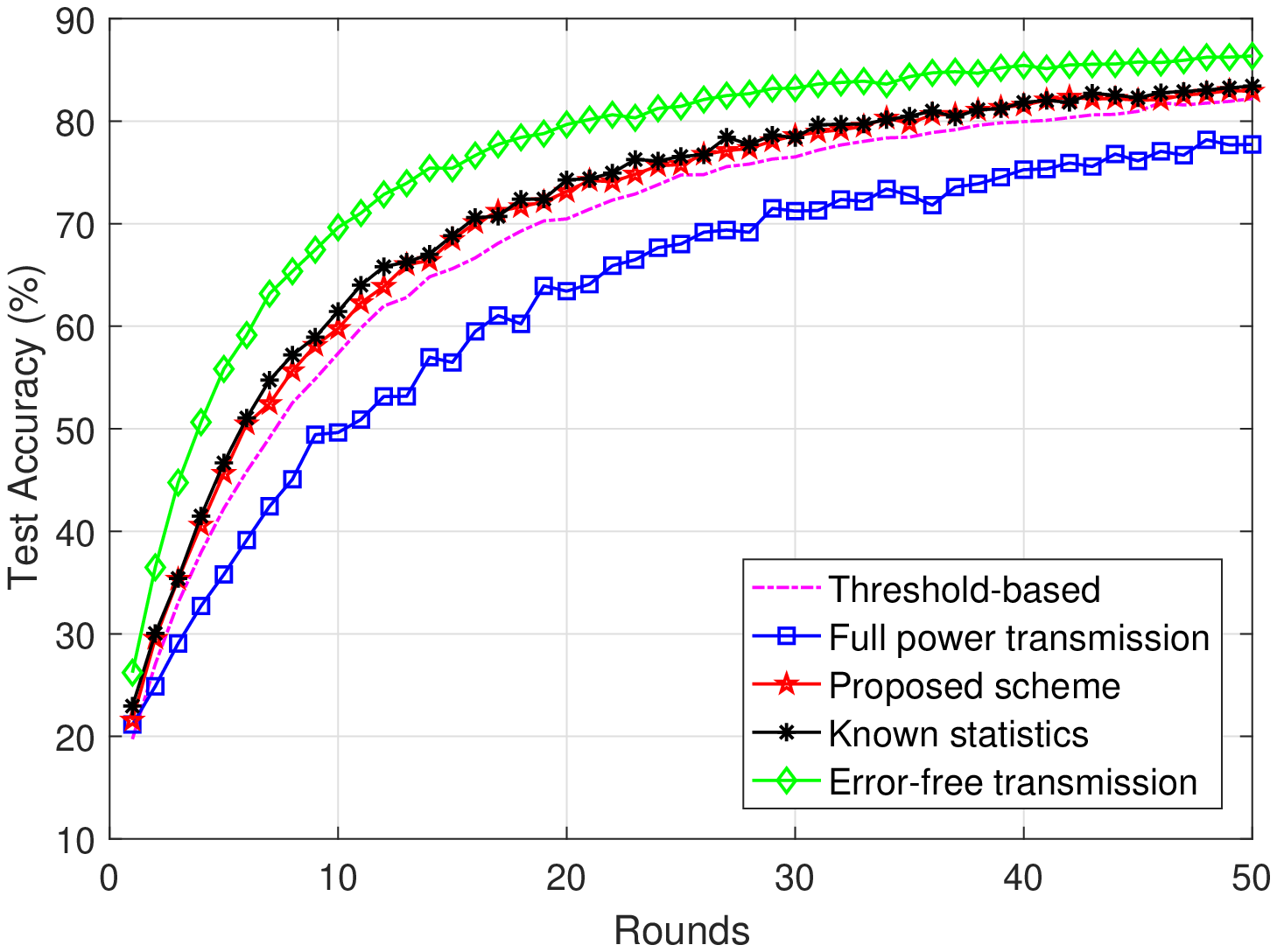}
\label{fig:mnist_non-iid}
\end{minipage}}

\subfigure[CIFAR-10 dataset with IID partition.]
{\begin{minipage}[t]{0.45\linewidth}
\centering
\includegraphics[width=2.6in]{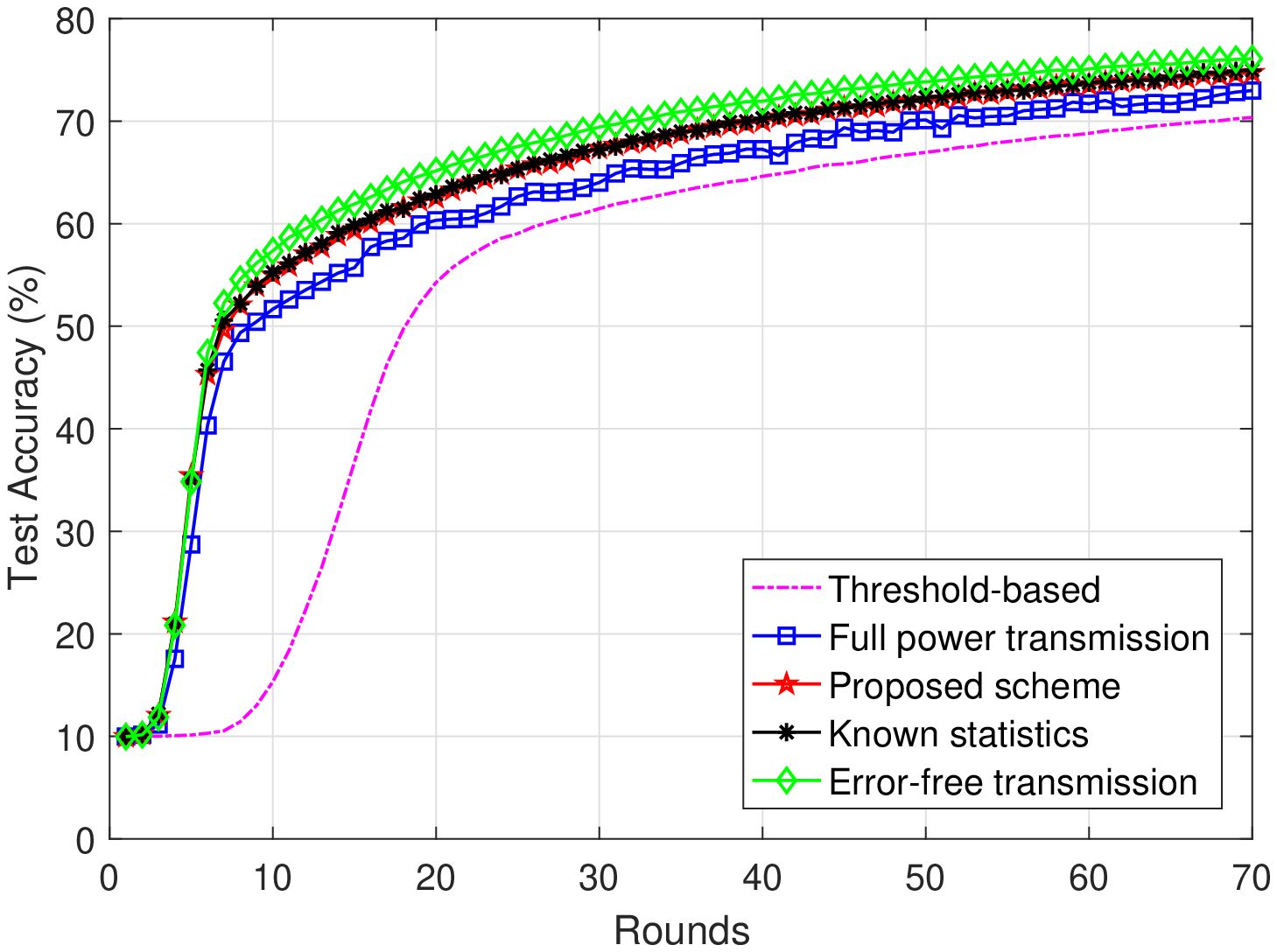}
\label{fig:cifar_iid}
\end{minipage}}
\subfigure[CIFAR-10 dataset with non-IID partition.]
{\begin{minipage}[t]{0.45\linewidth}
\centering
\includegraphics[width=2.6in]{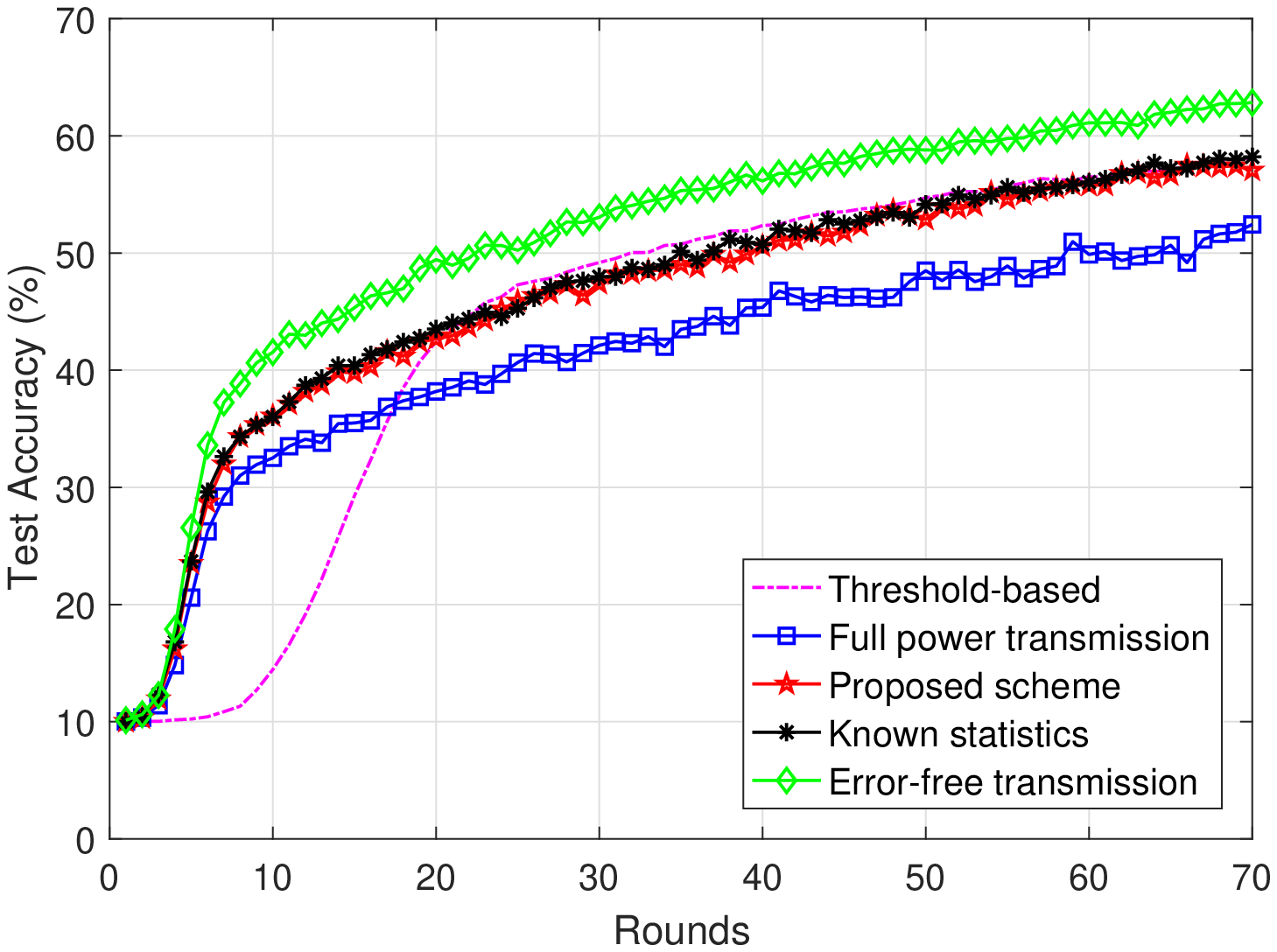}
\label{fig:cifar_non-iid}
\end{minipage}}

\subfigure[SVHN dataset with IID partition.]
{\begin{minipage}[t]{0.45\linewidth}
\centering
\includegraphics[width=2.6in]{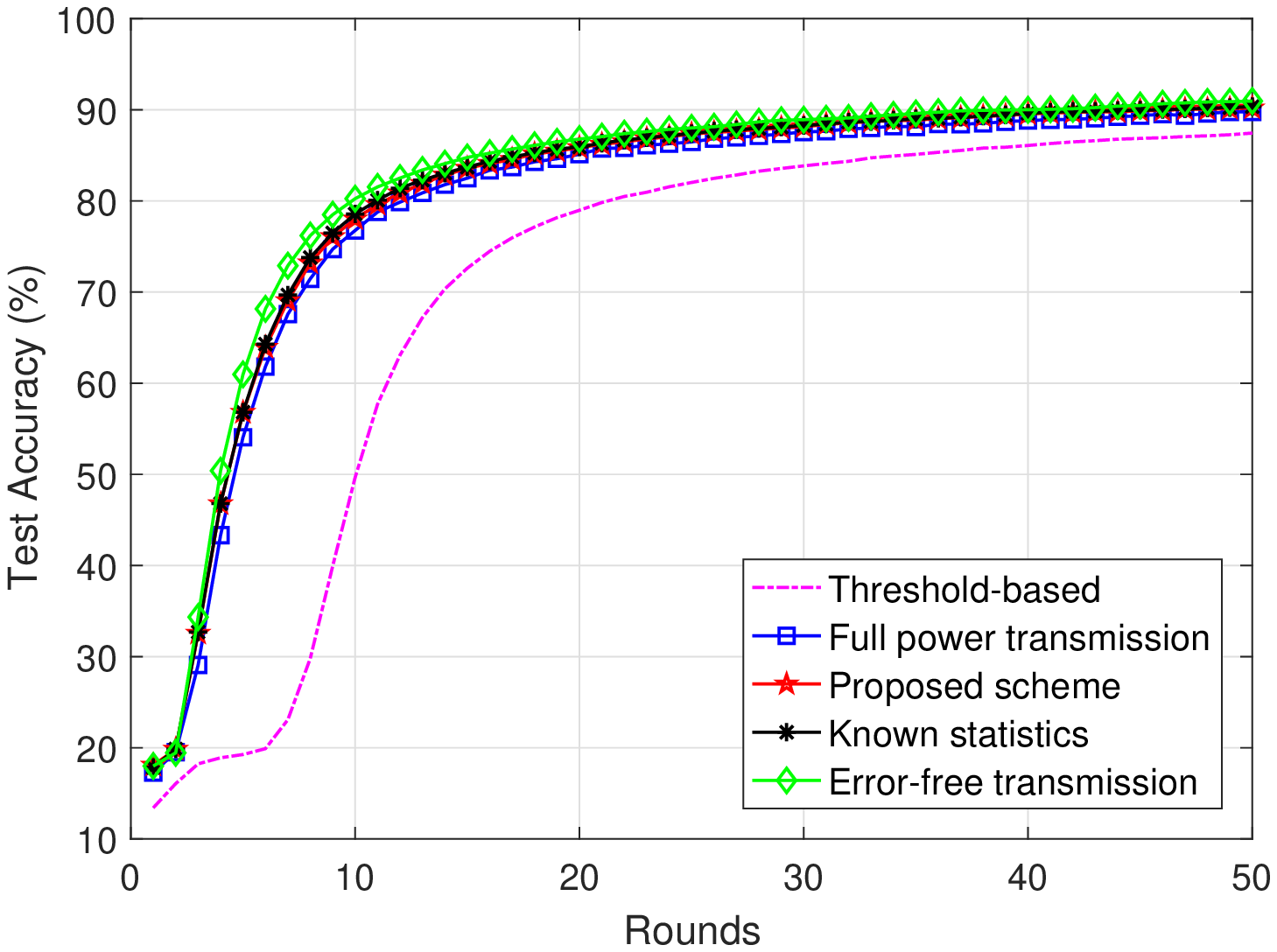}
\label{fig:SVHN_iid}
\end{minipage}}
\subfigure[SVHN dataset with non-IID partition.]
{\begin{minipage}[t]{0.45\linewidth}
\centering
\includegraphics[width=2.6in]{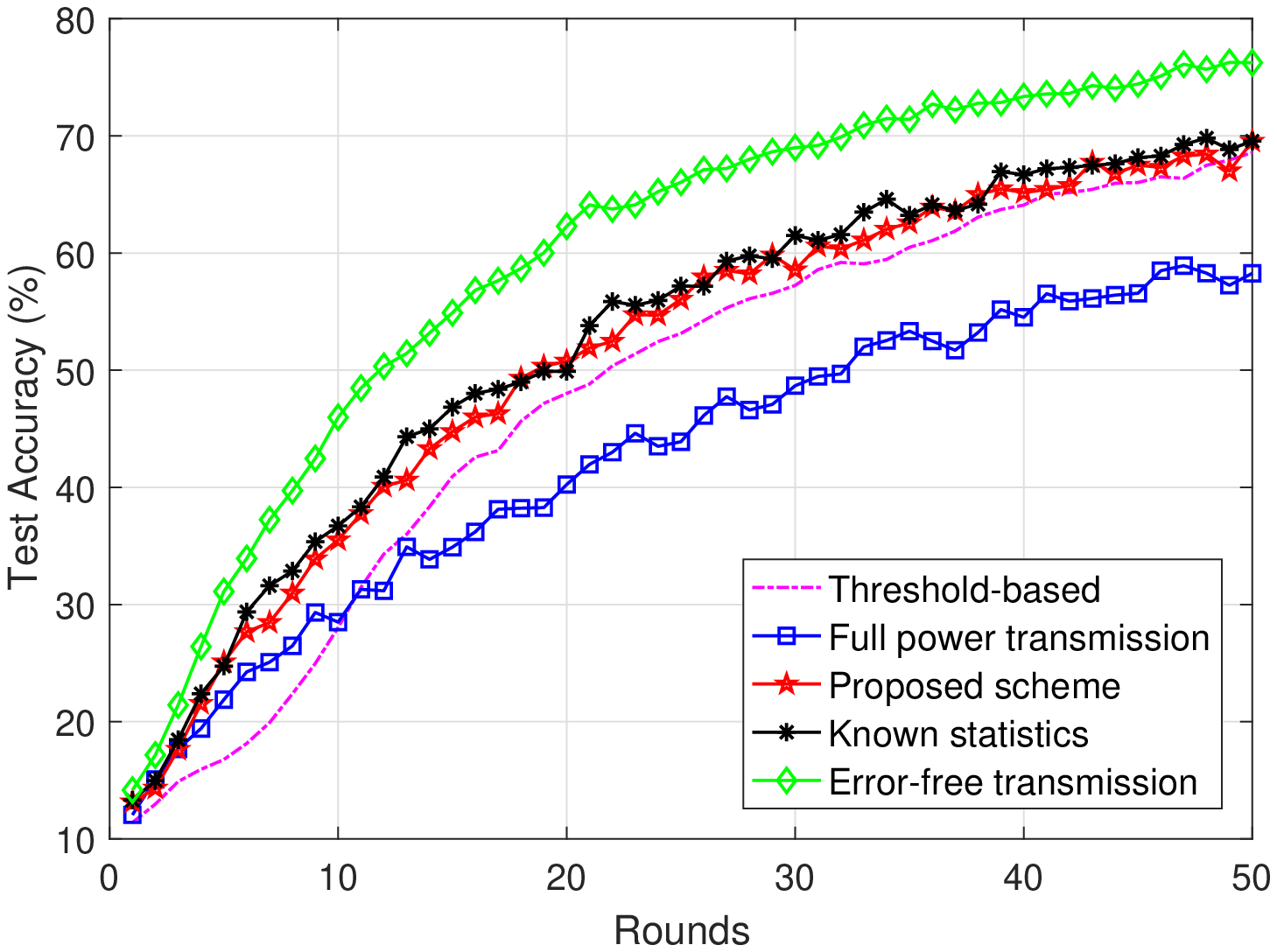}
\label{fig:SVHN_non-iid}
\end{minipage}}

\centering
\caption{Performance comparison on different dataset partition, for $K=10$ and $\mbox{SNR}_k=10\mbox{ dB}, \forall k\in\mathcal{K}$.}
\label{fig:varying beta} %% label for entire figure
\end{figure}
Fig.~\ref{fig:varying beta} compares the test accuracy for the three considered datasets with IID dataset partition and non-IID dataset partition, respectively, where the average received SNR is set as 10 dB for all devices.
It is observed that the performance gap to the scheme with known gradient statistics is very small, which indicates the proposed methods for estimating gradient statistics are effective.
It is also observed that the model accuracy of the proposed power control scheme is better than threshold-based power control and full power transmission.
From Fig.~\ref{fig:alpha_beta}, we have known that the averaged gradient SMCV $\beta(t)$ in the IID dataset partition is less than that in the non-IID dataset partition and it increases over iterations.
Threshold-based power control suffers significant accuracy loss in the IID partition or at the beginning of training.
This is because in this case, the gradient SMCV is small and thus the MSE is dominated by the composite misalignment error.
As a result, threshold-based power control that considers the individual misalignment error only is much inferior.
Besides, full power transmission suffers significant accuracy loss in the non-IID partition or at the end of training.
This is because the gradient SMCV is large and therefore the full power transmission scheme fails to minimize the individual misalignment error that dominates the MSE in this case.

\begin{figure}[t]
\begin{centering}
\vspace{-0.2cm}
\includegraphics[scale=.50]{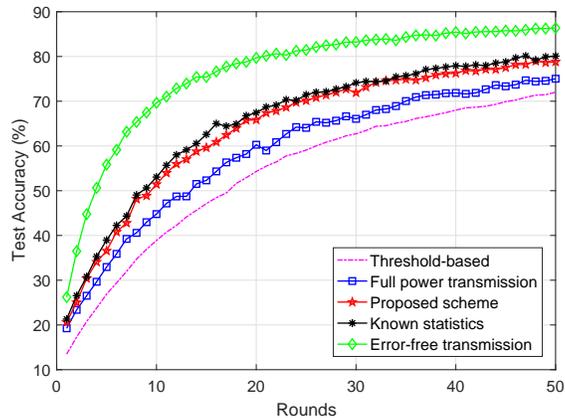}
\vspace{-0.1cm}
 \caption{\small{Performance comparison for MNIST dataset with non-IID partition at the average received $\mbox{SNR}=5\mbox{ dB}$.}}\label{fig:varying SNR}
\end{centering}
\vspace{-0.3cm}
\end{figure}
Fig.~\ref{fig:varying SNR} illustrates the test accuracy for MNIST with non-IID data partition at the average received $\mbox{SNR}=\mbox{5 dB}$.
It is observed that the overall performance of the proposed scheme is still better than two baseline approaches at low SNR region.
In specific, full power transmission performs better than threshold-based power control scheme.
This is mainly because when the noise variance is large, full power transmission can strongly suppress noise error that dominates the MSE.

\begin{figure}[t]
\begin{centering}
\vspace{-0.2cm}
\includegraphics[scale=.50]{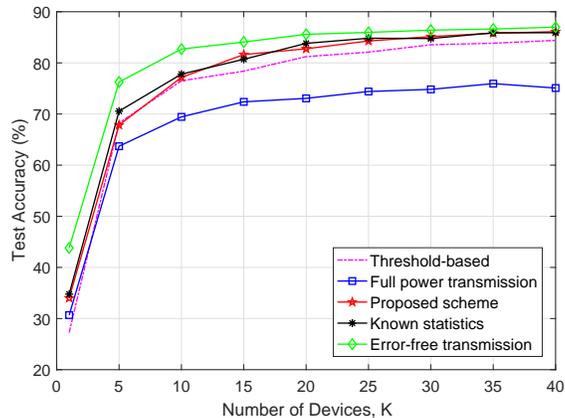}
\vspace{-0.1cm}
 \caption{\small{Performance comparison over the number of devices, where MNIST dataset is non-IID partition and $\mbox{SNR}_k=10\mbox{ dB}, \forall k\in\mathcal{ K}$.}}\label{fig:minst_devices}
\end{centering}
\vspace{-0.3cm}
\end{figure}
Finally, Fig.~\ref{fig:minst_devices} compares the test accuracy of different power control schemes at varying number of devices $K$.
Here, MNIST dataset with non-IID partition is used, the average received SNR of all the devices is set as $\mbox{SNR}_k=10$ dB and the results are averaged over $50$ model trainings.
First, it is observed that the test accuracy achieved by all the four schemes increases when the number of participating devices $K$ increases but cannot increase further when $K$ is large enough.
In particular, the test accuracy of all the considered schemes keeps unchanged when $K\geq 30$, due to the fact that the edge server can aggregate enough data for averaging.
Second, the proposed scheme outperforms both of threshold-based power control and full power transmission throughout the whole regime of $K$.
Full power transmission approaches threshold-based power control when $K$ is small (i.e., $K = 4$ in Fig.~\ref{fig:minst_devices}), but the performance compromises as $K$ increases, due to the lack of power adaptation to reduce the misalignment error.
\section{Conclusion}
This work studied the power control optimization problem for the over-the-air federated learning over fading channels by
taking the gradient statistics into account.
The optimal power control policy is derived in closed form when the first- and second-order gradient statistics are known.
It is shown that the optimal transmit power on each device decreases with gradient SMCV and increases with noise variance.
In the special cases where $\beta$ approaches infinity and zero, the optimal transmit power reduces to threshold-based power control and full power transmission, respectively.
We propose an adaptive power control algorithm that dynamically adjusts the transmit power in each iteration based on the estimation results.
Experimental results show that our proposed adaptive power control scheme outperforms the existing schemes.
In the future work, we can further exploit the gradient statistics for joint power control and beamforming when each edge device has multiple antennas.

\begin{appendices}
\section{Proof of Lemma \ref{lemma:eta_lower_bound}}
\label{proof:lemma:eta_lower_bound}
We prove this Lemma by contradiction.
Suppose the optimal denoising factor $\eta^*\leq C_1^2$.
Both the individual and composite signal misalignment errors can be forced to zero by letting $p_k^*=\frac{\eta\alpha}{|h_k|^2},\forall k\in\mathcal{K}$.
The problem $\mathcal{P}_1$ can thus be expressed as
\begin{equation}
\min_{\eta\leq C_1^2}{\frac{D\sigma_n^2}{\eta}}.
\end{equation}
It is obvious that the optimal solution of the above problem is $\eta^*=C_1^2$.
Thus, it must hold that $\eta\geq C_1^2$ for problem $\mathcal{P}_1$.

\section{Proof of Lemma \ref{lemma:optimal_power_constrain}} \label{proof:lemma:optimal_power_constrain}
To prove this lemma, we need to prove that if $p_{k_2}^* = P_{k_2}$ for some device $k_2$, we shall have $p_{k_1}^* = P_{k_1}, \forall k_1 < k_2$.
We prove this by contradiction.
Let $\bm{p}^*=[p_1^*,...,p_K^*]$ denote the optimal transmit power to the problem $\mathcal{P}_1$.
We assume that there are two devices $k_1<k_2$ satisfying $p_{k_1}^*<P_{k_1}$ and $p_{k_2}^*=P_{k_2}$.
Then there always exists a modified transmit power $\bm{p}^{'}=[p_1^*,...,p_{k_1-1}^*,p_{k_1}^{'},p_{k_1+1}^*,...,p_{k_2-1}^*,p_{k_2}^{'},p_{k_2+1}^*,...,p_K^*]$, where $p_{k_1}^{'}=P_{k_1}$ and $p_{k_2}^{'}<P_{k_2}$, satisfying $\frac{\sqrt{p_{k_1}^*}|h_{k_1}|}{\sqrt{\eta^*\alpha}}+\frac{\sqrt{p_{k_2}^*}|h_{k_2}|}{\sqrt{\eta^*\alpha}}=\frac{\sqrt{p_{k_1}^{'}}|h_{k_1}|}{\sqrt{\eta^*\alpha}}+\frac{\sqrt{p_{k_2}^{'}}|h_{k_2}|}{\sqrt{\eta^*\alpha}}$.
The resulting MSE obtained by $\bm{p}^{'}$ only differs from the minimum MSE by $\bm{p}^*$ in the individual misalignment error term.
The difference is given by
\begin{align}
&\mbox{MSE}(\bm{p}^*)-\mbox{MSE}(\bm{p}^{'})\nonumber\\
=&\frac{\beta\alpha}{\beta+1}\left[\left(\frac{\sqrt{p_{k_1}^*}|h_{k_1}|}{\sqrt{\eta^*\alpha}}\right)^2+\left(\frac{\sqrt{p_{k_2}^*}|h_{k_2}|}{\sqrt{\eta^*\alpha}}\right)^2-\left(\frac{\sqrt{p_{k_1}^{'}}|h_{k_1}|}{\sqrt{\eta^*\alpha}}\right)^2-\left(\frac{\sqrt{p_{k_2}^{'}}|h_{k_2}|}{\sqrt{\eta^*\alpha}}\right)^2\right]\nonumber\\
=&\frac{2\beta\alpha}{\beta+1}\left(\frac{\sqrt{p_{k_2}^*}|h_{k_2}|}{\sqrt{\eta^*\alpha}}-\frac{\sqrt{p_{k_2}^{'}}|h_{k_2}|}{\sqrt{\eta^*\alpha}}\right)\left(\frac{\sqrt{p_{k_2}^*}|h_{k_2}|}{\sqrt{\eta^*\alpha}}-\frac{\sqrt{p_{k_1}^{'}}|h_{k_1}|}{\sqrt{\eta^*\alpha}}\right)\nonumber\\
\geq&0.\label{equ:proof:lemma:optimal_power_constrain}
\end{align}
The inequality in (\ref{equ:proof:lemma:optimal_power_constrain}) holds as $p_{k_2}^*=P_{k_2}>p_{k_2}^{'}$ and $\frac{\sqrt{p_{k_2}^*}|h_{k_2}|}{\sqrt{\eta^*\alpha}}\geq\frac{\sqrt{p_{k_1}^{'}}|h_{k_1}|}{\sqrt{\eta^*\alpha}}$ by the aggregation capability ranking in (\ref{equ:sort}).
This indicates that $\bm{p}^{'}$ is a better solution than $\bm{p}^*$, which contradicts the assumption.
Therefore, for all pairs of two devices $k_1<k_2$, if $p_{k_2}^*=P_{k_2}$, we must have $p_{k_1}^*=P_{k_1}$.
Lemma \ref{lemma:optimal_power_constrain} is proved.

\section{Proof of Lemma \ref{lemma:illegal_solution}} \label{proof:lemma:illegal_solution}
To prove this lemma is equal to proving that if the optimal solution $\bm{\tilde{p}}^*_l$ of the sub-problem defined in the relaxed subregion $\tilde{\mathcal{M}_l}$ does not lie in $\mathcal{M}_l$, the global optimal solution $p^*$ of problem $\mathcal{P}_1$ must not lie in subregion $\mathcal{M}_l$.
First, we prove that $\bm{\tilde{p}}^*_l$ is the only local minimum point in $\tilde{\mathcal{M}_l}$.
Note that the derivative of a differentiable function at the local minimum point must be 0, and $\bm{\tilde{p}}^*_l$ is the only zero point of the derivative of (\ref{equ:problem_formulation}) w.r.t. $p_k$.
Hence, $\bm{\tilde{p}}^*_l$ is the only local minimum point in $\tilde{\mathcal{M}_l}$.
Now, we prove Lemma \ref{lemma:illegal_solution} by contradiction.
We assume that $\bm{\tilde{p}}^*_l\notin\mathcal{M}_l$, then there is no local minimum point in $\mathcal{M}_l$ due to $\mathcal{M}_l\subseteq\tilde{\mathcal{M}_l}$.
If $\bm{p}^*$ lies in $\mathcal{M}_l$, $\bm{p}^*$ must be a local minimum point due to the fact that $\bm{p}^*$ is the minimum point in $\mathcal{M}_l$ and $\bm{p}^*$ is an interior point of $\mathcal{M}_l$.
It contradicts that there is no local minimum point in $\mathcal{M}_l$.
Therefore, if $\bm{\tilde{p}}^*_l$ does not lie in $\mathcal{M}_l$, the global optimal transmit power $\bm{p}^*$ must not lie in the $l$-th subregion $\mathcal{M}_l$.
The proof of Lemma \ref{lemma:illegal_solution} is completed.

\section{Proof of Theorem \ref{theorem:optimality}} \label{proof:theorem:optimality}
To complete the proof, we need to show that with $l^*$ defined in (\ref{equ:optimal_tau}), the optimal transmit power is $\bm{\tilde{p}}^*_{l^*}$ in the $l^*$-th relaxed subregion $\tilde{\mathcal{M}_{l^*}}$.
Lemma \ref{lemma:illegal_solution} shows that $\forall l\in\mathcal{K}$, if $\bm{\tilde{p}}^*_l\notin\mathcal{M}_l$, the optimal transmit power $\bm{p}^*$ must not lie in $\mathcal{M}_l$.
By definition of $\mathcal{L}$, $\forall l\in\mathcal{K}\setminus\mathcal{L}$, $\bm{\tilde{p}}^*_l\notin\mathcal{M}_l$.
Hence, $\bm{p}^*$ must lie in $\bigcup_{l\in\mathcal{L}}{\mathcal{M}_l}$, i.e.,
\begin{align}
\mbox{MSE}(\bm{p}^*)=\min_{\bm{p}\in\bigcup_{l\in\mathcal{L}}{\mathcal{M}_l}}{\mbox{MSE}(\bm{p})}.
\label{equ:optimal_region}
\end{align}
For all $l\in\mathcal{L}$, $\bm{\tilde{p}}^*_l$ is the minimum point in $\tilde{\mathcal{M}_l}$, and it is also the minimum point in $\mathcal{M}_l$ due to $\bm{\tilde{p}}^*_l\in\mathcal{M}_l$ and $\mathcal{M}_l\subseteq\tilde{\mathcal{M}_l}$, i.e.,
\begin{align}
\min_{\bm{p}\in\mathcal{M}_l}{\mbox{MSE}(\bm{p})}=\mbox{MSE}(\bm{\tilde{p}}^*_l)=V_l,\forall l\in\mathcal{L}.
\label{equ:optimal_subregion}
\end{align}
Substituting (\ref{equ:optimal_subregion}) back to (\ref{equ:optimal_region}), we have
\begin{align}
\mbox{MSE}(\bm{p}^*)=\min_{\bm{p}\in\bigcup_{l\in\mathcal{L}}{\mathcal{M}_l}}{\mbox{MSE}(\bm{p})}=\min_{l\in\mathcal{L}}{V_l}=V_{l^*}.
\end{align}

Therefore, the candidate transmit power $\bm{\tilde{p}}^*_{l^*}$ with the smallest value $V_{l^*}$ is the optimal transmit power of the problem $\mathcal{P}_1$.
According to (\ref{equ:power_control_subregion}) and (\ref{equ:eta_function_subregion}), the optimal transmit power and denoising factor are the forms given in Theorem \ref{theorem:optimality}.

\section{Proof of Lemma \ref{lemma:condition_2_optimal}} \label{proof:lemma:condition_2_optimal}
When $\bm{p}^*$ lies in subregion $\mathcal{M}_l$, $\bm{p}^*$ is equal to $\bm{\tilde{p}}^*_l$.
Thus, to prove the sufficiency of this Lemma is equal to prove that the optimal transmit power $\bm{p}^*$ holds
\begin{align}
C_{l^*}\leq\frac{\sqrt{p_k^*}|h_k|}{\sqrt{\alpha}}<C_{l^*+1}, \forall{k}\in\{l^*+1,...,K\}.
\label{equ:first_inequality}
\end{align}
Based on (\ref{equ:power_control}), $p_k^*<P_k$, for $k\in\{l^*+1,...,K\}$, and when $k\in\{l^*+1\}$, we have $\frac{\sqrt{p_{l^*+1}^*}|h_{l^*+1}|}{\sqrt{\alpha}}<C_{l^*+1}$.
Since $\frac{\sqrt{p_k^*}|h_k|}{\sqrt{\alpha}}, \forall{k}\in\{l^*+1,...,K\}$ is the same, we have the inequality $\frac{\sqrt{p_k^*}|h_k|}{\sqrt{\alpha}}<C_{l^*+1}$, for $k\in\{l^*+1,...,K\}$.
We prove $C_{l^*}\leq\frac{\sqrt{p_k^*}|h_k|}{\sqrt{\alpha}}, \forall{k}\in\{l^*+1,...,K\}$ by contradiction.
We assume that $C_{l^*}>\frac{\sqrt{p_{l^*+1}^*}|h_{l^*+1}|}{\sqrt{\alpha}}$.
As $p_k^*=P_k$, for $k\in\{1,...,l^*\}$, we have $\frac{\sqrt{p_{l^*}^*}|h_{l^*}|}{\sqrt{\alpha}}=C_{l^*}>\frac{\sqrt{p_{l^*+1}^*}|h_{l^*+1}|}{\sqrt{\alpha}}$.
Then there always exists a modified transmit power $\bm{p}^{'}=[p_1^*,...,p_{l^*-1}^*,p_{l^*}^{'},p_{l^*+1}^{'},p_{l^*+2}^*,...,p_K^*]$ where transmit power $p_{l^*}^{'}$ and $p_{l^*+1}^{'}$ satisfy $\frac{\sqrt{p_{l^*}^{'}}|h_{l^*}|}{\sqrt{\alpha}}=\frac{\sqrt{p_{l^*+1}^{'}}|h_{l^*+1}|}{\sqrt{\alpha}}=\frac{1}{2}(\frac{\sqrt{p_{l^*}^*}|h_{l^*}|}{\sqrt{\alpha}}+\frac{\sqrt{p_{l^*+1}^*}|h_{l^*+1}|}{\sqrt{\alpha}})$.
The difference between MSE of the transmit power $\bm{p}^*$ and $\bm{p}^{'}$ is given by
\begin{align}
&\mbox{MSE}(\bm{p}^*)-\mbox{MSE}(\bm{p}^{'})\nonumber\\
=&\frac{\beta\alpha}{\beta+1}\left[\left(\frac{\sqrt{p_{l^*}^*}|h_{l^*}|}{\sqrt{\eta^*\alpha}}\right)^2+\left(\frac{\sqrt{p_{l^*+1}^*}|h_{l^*+1}|}{\sqrt{\eta^*\alpha}}\right)^2-\left(\frac{\sqrt{p_{l^*}^{'}}|h_{l^*}|}{\sqrt{\eta^*\alpha}}\right)^2-\left(\frac{\sqrt{p_{l^*+1}^{'}}|h_{l^*+1}|}{\sqrt{\eta^*\alpha}}\right)^2\right]\nonumber\\
=&\frac{\beta\alpha}{2\left(\beta+1\right)}\bigg[\left(\frac{\sqrt{p_{l^*}^*}|h_{l^*}|}{\sqrt{\eta^*\alpha}}-\frac{\sqrt{p_{l^*+1}^*}|h_{l^*+1}|}{\sqrt{\eta^*\alpha}}\right)^2\bigg]>0.
\label{equ:averaged_power_is_better}
\end{align}
This indicates that $\bm{p}^{'}$ is a better solution than $\bm{p}^*$, which contradicts the assumption.
We have the inequality $C_{l^*}\leq\frac{\sqrt{p_k^*}|h_k|}{\sqrt{\alpha}}$, for $k\in\{l^*+1,...,K\}$.
Thus, the sufficiency of this Lemma has been proved.

To prove the necessity of this Lemma, we prove the inverse negative proposition of it: if the optimal transmit power $\bm{p}^*$ does not lie in subregion $\mathcal{M}_l$, i.e., $l\in\{1,...,l^*-1,l^*+1,...,K\}$, $\bm{\tilde{p}}^*_l$ satisfies: $\frac{\sqrt{\tilde{p}^*_{l,k}}|h_k|}{\sqrt{\alpha}}\geq C_{l+1}$ or $\frac{\sqrt{\tilde{p}^*_{l,k}}|h_k|}{\sqrt{\alpha}}<C_{l}$ for all $k\in\{l+1,...,K\}$.

First, we prove that $\bm{\tilde{p}^*_l}$, for $\forall l\in\{1,...,l^*-1\}$ holds
\begin{align}
\frac{\sqrt{\tilde{p}^*_{l,k}}|h_{k}|}{\sqrt{\alpha}}\geq C_{l+1}, \forall k\in\{l+1,...,K\}.
\label{equ:second_inequality}
\end{align}
$\forall l\in\{1,...,l^*-1\}$, $\mathcal{M}_{l^*}\subseteq\tilde{\mathcal{M}_l}$, then we have $\mbox{MSE}(\bm{\tilde{p}^*_l})\leq\mbox{MSE}(\bm{p}^*)$.
We prove (\ref{equ:second_inequality}) by contradiction.
If $\exists{l}\in\{1,...,l^*-1\},\frac{\sqrt{\tilde{p}^*_{l,k}}|h_{k}|}{\sqrt{\alpha}}<C_{l+1},\forall k\in\{l+1,...,K\}$, i.e., $\bm{\tilde{p}^*_l}\in\mathcal{M}_l$, the feasible transmit power $\bm{\tilde{p}}^*_l$ is better than the optimal transmit power $\bm{p}^*$, which contradicts the assumption.
We have the inequality (\ref{equ:second_inequality}).

Second, we prove that $\bm{\tilde{p}^*_l}$, for $\forall l\in\{l^*+1,...,K\}$ holds
\begin{align}
\frac{\sqrt{\tilde{p}^*_{l,k}}|h_{k}|}{\sqrt{\alpha}}<C_l, \forall k\in\{l+1,...,K\}.
\label{equ:third_inequality}
\end{align}
We prove this by contradiction.
When $l=l^*+1$, we assume that $\frac{\sqrt{\tilde{p}^*_{l,k}}|h_{k}|}{\sqrt{\alpha}}\geq C_l,\forall k\in\{l+1,...,K\}$.
Let $\bm{p}^{avg}=[\tilde{p}^*_{l,1},...,\tilde{p}^*_{l,l-1},p_{l}^{avg},...,p_K^{avg}]$ denote a modified $\bm{\tilde{p}}^*_l$, where $\frac{\sqrt{p_{l}^{avg}}|h_{l}|}{\sqrt{\alpha}}=...=\frac{\sqrt{p_{K}^{avg}}|h_{K}|}{\sqrt{\alpha}}=\frac{1}{K-l+1}\sum_{i=l}^{K}{\frac{\sqrt{\tilde{p}^*_{l,i}}|h_{i}|}{\sqrt{\alpha}}}$.
Using a proof similar to that of (\ref{equ:averaged_power_is_better}), we can prove that $\mbox{MSE}(\bm{p}^{avg})\leq\mbox{MSE}(\bm{\tilde{p}}^*_l)$.
Let $\bm{p}^{fes}=[\tilde{p}^*_{l,1},...,\tilde{p}^*_{l,l-1},p_{l}^{fes},...,p_K^{fes}]$ denote a modified $\bm{\tilde{p}}^*_l$, where $\frac{\sqrt{p^{fes}_k}|h_k|}{\sqrt{\alpha}}=C_l,$, for $k\in\{l,...,K\}$.
The derivative of (\ref{equ:problem_formulation}) w.r.t. $p_k$ for all $k\in\{l+1,...,K\}$ shows that the MSE increases with $p_k$ when $p_k\geq p_k^*$.
Note that $\bm{p}^{avg}$ and $\bm{p}^{fes}$ are in the $l^*$-th relaxed subregion $\mathcal{\tilde{M}}_{l^*}$ and $p^{avg}_k\geq p^{fes}_k>p_k^*$, for $k\in\{l,...,K\}$, $\mbox{MSE}(\bm{p}^{fes})\leq\mbox{MSE}(\bm{p}^{avg})$ and $\mbox{MSE}(\bm{p}^{fes})\leq\mbox{MSE}(\bm{\tilde{p}}^*_l)$.
The transmit power $\bm{p}^{fes}\in\mathcal{\tilde{M}}_l$ is better than $\bm{\tilde{p}}^*_l\in\mathcal{\tilde{M}}_l$, which contradicts that $\bm{\tilde{p}}^*_l$ is optimal solution in $\mathcal{\tilde{M}}_l$.
We have the inequality (\ref{equ:third_inequality}) when $l=l^*+1$ and it can be extended to when $l\in\{l^*+2,...,K\}$ by mathematical induction.

Thus, the necessity of this Lemma has been proved.
We complete the proof of Lemma \ref{lemma:condition_2_optimal}.

\section{Proof of Lemma \ref{lemma:continuous_globally}}
\label{proof:lemma:continuous_globally}
We first prove that the upper boundary $\mathcal{U}_l$ is equal to the lower boundary $\mathcal{L}_{l+1}$, for $l\in\{1,...,K-1\}$.
Then we prove that the optimal transmit power function $\bm{p}^*(\alpha,\beta,\sigma^2_n)$ is continuous at $\mathcal{U}_l=\mathcal{L}_{l+1}$, for $l\in\{1,...,K-1\}$.

For all $(\alpha,\beta,\sigma^2_n)\in\mathcal{L}_{l+1}$, the corresponding optimal transmit power $\bm{\tilde{p}}^*_{l+1}(\alpha,\beta,\sigma^2_n)=(\frac{C_{l+1}\sqrt{\alpha}}{|h_k|})^2$, i.e., $\frac{\sqrt{\tilde{p}_{l+1,k}^*(\alpha,\beta,\sigma^2_n)}|h_k|}{\sqrt{\alpha}}=C_{l+1}$, for $k\in\{l+2,...,K\}$.
Based on (\ref{equ:power_control_subregion}) and (\ref{equ:eta_function_subregion}), we have
\begin{subequations}
\label{equ:equal_power}
\begin{align}
& & \sqrt{\tilde{\eta}^*_{l+1}(\alpha,\beta,\sigma^2_n)}&=\frac{\sum_{i=1}^{l+1}{C_i}+(\beta+K-l-1)C_{l+1}}{\beta+K}\\
\Leftrightarrow & & \frac{\frac{\beta\alpha}{\beta+1}\sum\limits_{i=1}^{l+1}C_i^2+D\sigma_n^2}{\frac{\beta(\beta+K)\alpha}{(\beta+K-l-1)(\beta+1)}\sum\limits_{i=1}^{l+1}C_i}&=\frac{(\beta+K-l-1)C_{l+1}}{\beta+K}\\
\Leftrightarrow & & \frac{\beta\alpha}{\beta+1}\sum\limits_{i=1}^{l+1}C_i^2+D\sigma_n^2&=\frac{\beta\alpha}{\beta+1}C_{l+1}\sum_{i=1}^{l+1}{C_i}\\
\Leftrightarrow & & \frac{\beta\alpha}{\beta+1}\sum\limits_{i=1}^{l}C_i^2+D\sigma_n^2&=\frac{\beta\alpha}{\beta+1}C_{l+1}\sum_{i=1}^l{C_i}\\
\Leftrightarrow & & \frac{\frac{\beta\alpha}{\beta+1}\sum\limits_{i=1}^{l}C_i^2+D\sigma_n^2}{\frac{\beta(\beta+K)\alpha}{(\beta+K-l)(\beta+1)}\sum\limits_{i=1}^{l}C_i}&=\frac{(\beta+K-l)C_{l+1}}{\beta+K}\\
\Leftrightarrow & & \frac{\frac{\beta\alpha}{\beta+1}\sum\limits_{i=1}^{l}C_i^2+\frac{\beta\alpha}{(\beta+K-l)(\beta+1)}\bigg(\sum\limits_{i=1}^{l}C_i\bigg)^2+D\sigma_n^2}{\frac{\beta(\beta+K)\alpha}{(\beta+K-l)(\beta+1)}\sum\limits_{i=1}^{l}C_i}&=\frac{\sum_{i=1}^l{C_i}+(\beta+K-l)C_{l+1}}{\beta+K}\\
\Leftrightarrow & & \sqrt{\tilde{\eta}^*_l(\alpha,\beta,\sigma^2_n)}&=\frac{\sum_{i=1}^l{C_i}+(\beta+K-l)C_{l+1}}{\beta+K}\\
\Leftrightarrow & & \frac{\sqrt{\tilde{p}_{l,k}^*(\alpha,\beta,\sigma^2_n)}|h_k|}{\sqrt{\alpha}}&=C_{l+1}, \forall k\in\{l+1,...,K\}\\
\Leftrightarrow & & \tilde{p}^*_{l,k}(\alpha,\beta,\sigma^2_n)&=(\frac{C_{l+1}\sqrt{\alpha}}{|h_k|})^2, \forall k\in\{l+1,...,K\},
\end{align}
\end{subequations}
then we have $(\alpha,\beta,\sigma^2_n)\in\mathcal{U}_l$ by the definition of $\mathcal{U}_l$, i.e., $\mathcal{L}_{l+1}\subseteq\mathcal{U}_l$.
For all $(\alpha,\beta,\sigma^2_n)\in\mathcal{U}_l$, we have $(\alpha,\beta,\sigma^2_n)\in\mathcal{L}_{l+1}$, i.e., $\mathcal{U}_l\subseteq\mathcal{L}_{l+1}$ by reversing the derivation of (\ref{equ:equal_power}).
Therefore, we have $\mathcal{L}_{l+1}=\mathcal{U}_l$, for $l\in\{1,...,K-1\}$.

For all $(\alpha,\beta,\sigma^2_n)_0\in\mathcal{L}_{l+1}=\mathcal{U}_l$, define $(\alpha,\beta,\sigma^2_n)_+\in\mathcal{X}_{l+1}$ and $(\alpha,\beta,\sigma^2_n)_-\in\mathcal{X}_{l}$ as two points infinitely close to $(\alpha,\beta,\sigma^2_n)_0$, respectively.
Then the left limitation of $\bm{p}^*((\alpha,\beta,\sigma^2_n)_0)$ is given by
\begin{align}
\lim_{(\alpha,\beta,\sigma^2_n)_-\rightarrow(\alpha,\beta,\sigma^2_n)_0}{\bm{p}^*((\alpha,\beta,\sigma^2_n)_-)}
&=\lim_{(\alpha,\beta,\sigma^2_n)_-\rightarrow(\alpha,\beta,\sigma^2_n)_0}{\bm{\tilde{p}}^*_l((\alpha,\beta,\sigma^2_n)_-)}\nonumber\\
&=\bm{\tilde{p}}^*_l((\alpha,\beta,\sigma^2_n)_0)=\bm{\tilde{p}}^*_{l+1}((\alpha,\beta,\sigma^2_n)_0)=\bm{p}^*((\alpha,\beta,\sigma^2_n)_0).
\end{align}
Similarly, the right limitation of $\bm{p}^*((\alpha,\beta,\sigma^2_n)_0)$ is also equal to $\bm{p}^*((\alpha,\beta,\sigma^2_n)_0)$.
Therefore, the optimal transmit power function $\bm{p}^*(\alpha,\beta,\sigma^2_n)$ is continuous at $\mathcal{U}_l=\mathcal{L}_{l+1}$, for $l\in\{1,...,K-1\}$.
We complete the proof of Lemma \ref{lemma:continuous_globally}.
\end{appendices}

\bibliographystyle{IEEEtran}
\bibliography{IEEEabrv,power_control}

\end{document}